\documentclass[a4paper]{article}

\usepackage{amssymb}
\usepackage{amsmath}
\usepackage{amsfonts, amssymb}
\usepackage{amscd}

\usepackage{amssymb}
\usepackage{amsthm}
\usepackage{amsmath}
\usepackage[utf8]{inputenc}
\usepackage{hyperref}
\usepackage[pdftex]{graphicx}
\usepackage{bold-extra}
\usepackage{tocbibind}
\usepackage{pgf}
\usepackage{pgfgantt}

\usepackage{amsfonts}
\usepackage{latexsym}
\usepackage{epsfig}
\usepackage{graphicx}
\usepackage{url}

\newcommand{\PAut}{{\operatorname{PAut}}}
\newcommand{\Sym}{{\operatorname{Sym}}}
\newcommand{\wt}{{\operatorname{wt}}}
\newcommand{\PD}{{\operatorname{PD}}}
\newcommand{\AGL}{{\operatorname{AGL}}}
\newcommand{\GL}{{\operatorname{GL}}}

\newcommand{\Z}{{\mathbb{Z}}}

\newcommand{\R}{{\mathcal R}}
\newcommand{\T}{{\mathcal T}}

\newtheorem{definition}{Definition}
\newtheorem{proposition}[definition]{Proposition}
\newtheorem{theorem}[definition]{Theorem}
\newtheorem{example}[definition]{Example}
\newtheorem{lemma}[definition]{Lemma}
\newtheorem{corollary}[definition]{Corollary}

\title{Partial permutation decoding for binary linear and $Z_4$-linear Hadamard codes \thanks{This work was partially supported by the
Spanish MINECO under Grant TIN2013-40524-P, and by the Catalan AGAUR under Grant 2014SGR-691. The material in this paper was presented in part at IX ``Jornadas de Matem\'{a}tica Discreta y Algor\'{\i}tmica'' in Tarragona, Spain, 2014 \cite{BV}.}}

\author{Roland D. Barrolleta\thanks{Departament d'Enginyeria de la Informaci\'{o} i de les Comunicacions,
Universitat Aut\`{o}noma de Barcelona, e-mails: rolanddavid.barrolleta@uab.cat and merce.villanueva@uab.cat.} \and
 Merc\`{e} Villanueva\footnotemark[2] } 



\providecommand{\keywords}[1]{\textbf{\textit{Index terms---}} #1}

\begin{document}

\maketitle

\begin{abstract}
Permutation decoding is a technique which involves finding a subset $S$, called $\PD$-set, of the permutation automorphism group of a code $C$ in order to assist in decoding. An explicit construction of $\left \lfloor{\frac{2^m-m-1}{1+m}} \right \rfloor$-$\PD$-sets of minimum size $\left \lfloor{\frac{2^m-m-1}{1+m}} \right \rfloor + 1$ for partial permutation decoding for binary linear Hadamard codes $H_m$ of length $2^m$, for all $m\geq 4$,
is described. 
Moreover, a recursive construction to obtain $s$-$\PD$-sets of size $l$ for $H_{m+1}$ of length $2^{m+1}$, from a given $s$-$\PD$-set of the same size for $H_m$, is also established. These results are generalized to find $s$-$\PD$-sets for (nonlinear) binary Hadamard codes of length $2^m$,
called $\Z_4$-linear Hadamard codes, which are obtained as the Gray map image of quaternary linear codes of length $2^{m-1}$.
\end{abstract}

\keywords{automorphism group, permutation decoding, $\PD$-set, Hadamard code, $\Z_4$-linear code}

\section{Introduction}

Denote by $\mathbb Z_2$ and $\mathbb Z_4$ the rings of integers modulo 2 and modulo 4, respectively. Let $\Z_2^n$ denote the set of all binary vectors of length $n$ and let $\mathbb Z_4^n$ be the set of all $n$-tuples over the ring $\mathbb Z_4$.  The \textit{Hamming weight} $\wt(v)$ of a vector $v \in \mathbb Z_2^n$ is the number of nonzero coordinates in $v$. The \textit{Hamming distance} $d(u,v)$ between two vectors $u, v \in \mathbb Z_2^n$ is the number of coordinates in which $u$ and $v$ differ, that is, $d(u,v)=\wt(u+v)$. Let $e_i$ be the binary vector or tuple over $\Z_4$ with a one in the $i$th coordinate and zeros elsewhere. Let $\mathbf{0}, \mathbf{1}, \mathbf{2}$ and $\mathbf {3}$ be the binary vectors or tuples over $\Z_4$ having 0, 1, 2 and 3, respectively, repeated in each coordinate. It will be clear by the context whether we refer to binary vectors or tuples over $\Z_4$.

Any nonempty subset $C$ of $\mathbb Z_2^n$ is a binary code and a subgroup of $\mathbb Z_2^n$ is called a {\em binary linear code}. Equivalently, any nonempty subset $\mathcal C$ of $\mathbb Z_4^n$ is a quaternary code and a subgroup of $\mathbb Z_4^n$ is called a {\em quaternary linear code}. Quaternary codes can be seen as binary codes under the usual Gray map $\Phi: \mathbb Z_4^n \rightarrow \mathbb Z_2^{2n}$ defined as $\Phi((y_1,\ldots,y_n))=(\phi(y_1),\ldots, \phi(y_n))$, where $\phi(0)=(0,0)$, $\phi(1)=(0,1)$, $\phi(2)=(1,1)$, $ \phi(3)=(1,0)$,
for all $y=(y_1, \dots, y_{n}) \in \mathbb Z_4^{n}$. If $\mathcal C$ is a quaternary linear code, the binary code $C=\Phi(\mathcal C)$ is said to be a  $\mathbb Z_4$-{\em linear code}. Moreover, since $\mathcal C$ is a subgroup of $\mathbb Z_4^n$, it is isomorphic to an abelian group $\mathbb Z_2^{\gamma}\times \mathbb Z_4^{\delta}$ and we say that $\mathcal C$ (or equivalently the corresponding $\mathbb Z_4$-linear code $C=\Phi(\mathcal C)$) is of type $2^{\gamma}4^{\delta}$ \cite{Z4}.

Let $C$ be a binary code of length $n$ and size $|C|=2^k$. For a vector $v \in \mathbb Z_2^n$ and a set $I \subseteq\{1, \dots, n\}$, we denote by $v_{I}$ the restriction of $v$ to the coordinates in $I$ and by $C_I$ the set $\{v_I : v \in C\}$. A set $I\subseteq \{1, \dots, n\}$ of $k$ coordinate positions is an {\em information set} for $C$ if $|C_I|=2^k$. If such an $I$ exists, $C$ is said to be a {\em systematic code}. For each information set $I$ of size $k$, the set $\{1, \dots, n\} \backslash I$ of the remaining $n-k$ coordinate positions is a {\em check set} for $C$. 


Let $\operatorname{Sym}(n)$ be the symmetric group of permutations on the set $\{1,\dots,n\}$ and let $\operatorname{id}\in \operatorname{Sym}(n)$ be the identity permutation. The group operation in $\operatorname{Sym}(n)$ is the function composition $\sigma_1\sigma_2$, which maps any element $x$ to $\sigma_1(\sigma_2(x))$, $\sigma_1, \sigma_2 \in \Sym(n)$. A $\sigma \in \operatorname{Sym}(n)$ acts linearly on words of $\mathbb Z_2^n$ or  $\mathbb Z_4^n$ by permuting their coordinates as follows: $\sigma((v_1, \dots, v_n))=(v_{\sigma^{-1}(1)}, \dots, v_{\sigma^{-1}(n)})$. The {\em permutation automorphism group} of $\mathcal C$ or $C=\Phi(\mathcal C)$, denoted by $\operatorname{PAut}({\mathcal C})$ or $\operatorname{PAut}(C)$, respectively, is the group generated by all permutations that preserve the set of codewords.

A {\em binary Hadamard code} of length $n$ has $2n$ codewords and minimum distance $n/2$. It is well-known that there exists an unique binary linear Hadamard code $H_m$ of length $n=2^m$, for any $m \geq 2$.
The quaternary linear codes such that, under the Gray map, give a binary Hadamard code are called {\em quaternary linear Hadamard codes} and the corresponding $\mathbb Z_4$-linear codes are called $\Z_4$-{\em linear  Hadamard codes}. These codes have been studied and classified in \cite{K, PRV}, and their permutation automorphism groups have been determined in \cite{KVi, PPV}.

Permutation decoding is a technique, introduced in \cite{M} by MacWilliams for linear codes, which involves finding a subset of the permutation automorphism group of a code in order to assist in decoding. A new permutation decoding method for $\Z_4$-linear codes (not necessarily linear) was introduced in \cite{BeBoFeVi}. In general, the method works as follows. Given a systematic $t$-error-correcting code $C$ with information set $I$, we denote by $y=x+e$ the received vector, where $x \in C$ and $e$ is the error vector. Suppose that at most $t$ errors occur, that is, $\wt(e)\leq t$. The permutation decoding consists on moving all errors in $y$ out of
$I$, by using an automorphism of $C$. This technique is strongly based on the existence of some special subsets of $\mathrm{PAut}(C)$, called $\PD$-sets.
Specifically, a subset $S \subseteq \operatorname{PAut}(C)$ is said to be an {\em $s$-$\PD$-set} for the code $C$ if every $s$-set of coordinate positions is moved out of $I$ by at least one element of $S$, where $1 \leq s \leq t$. When $s=t$, $S$ is said to be a {\em $\PD$-set}.

In \cite{FKM}, it is shown how to find $s$-$\PD$-sets of size $s+1$ that satisfy the Gordon-Sch\"onheim bound for partial permutation decoding for the binary simplex code of lenght $2^{m}-1$ for all $m\geq 4$ and $1<s\leq \left \lfloor{\frac{2^m-m-1}{m}} \right \rfloor$. In this paper, following the same technique, similar results for binary linear and $\Z_4$-linear Hadamard codes are established. In \cite{S}, 2-$\PD$-sets of size 5 and 4-$\PD$-sets of size $ m+1 \choose 2$ + 2 are found for binary linear Hadamard codes $H_m$, for all $m>4$. Small $\PD$-sets that satisfy the Gordon-Sch\"onheim bound have also been found for binary Golay codes \cite{G,W} and for the binary simplex code $S_4$ \cite{KV2}.

This work is organized as follows.
In Section \ref{sec:size}, we prove that the Gordon-Sch\"onheim bound
can be adapted to systematic codes, not necessarily linear. Furthermore, we apply this bound on the minimum size of $s$-$\PD$-sets to binary linear and $\Z_4$-linear Hadamard codes, which are systematic but nonlinear in general, and we prove that their minimum size is $s+1$. In Section \ref{sec:binarylinear}, we regard the permutation automorphism group $\PAut(H_m)$ as a certain subgroup of the general linear group $\operatorname{GL}(m+1, 2)$ and we provide a criterion on subsets of matrices of such subgroup to be an $s$-$\PD$-set of size $s+1$ for $H_m$.
In Section \ref{sec:binarylinearRecursive}, we define recursive constructions to obtain $s$-$\PD$-sets of size $l$ for $H_{m+1}$ from a given $s$-$\PD$-set of the same size for $H_m$, where  $l\geq s+1$. Finally, in Sections \ref{sec:Z4linear} and \ref{sec:Z4linearRecursive}, we establish equivalent results for (nonlinear) $\Z_4$-linear Hadamard codes.

\section{Minimum size of $s$-$\PD$-sets for Hadamard codes}
\label{sec:size}

There is a well-known bound on the minimum size of $\PD$-sets for linear codes based on the length, dimension and minimum distance of such codes that can be adapted to systematic codes (not necessarily linear) easily.

\begin{proposition}
\label{boundsize}
Let $C$ be a systematic $t$-error correcting code of length $n$, size $|C|=2^k$ and minimum distance $d$. Let $r=n-k$ be the redundancy of $C$. If $S$ is a $\PD$-set for $C$, then
\begin{equation} \label{boundsizeIneq} |S| \geq \left \lceil{\frac{n}{r} \left \lceil{\frac{n-1}{r-1}\left \lceil{\dots \left \lceil{\frac{n-t+1}{r-t+1} }\right \rceil \ \dots}\right \rceil} \right \rceil} \right \rceil.
\end{equation}
\end{proposition}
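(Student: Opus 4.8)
The plan is to reduce the statement to the Schönheim bound for covering designs and then prove that bound by a double-counting argument together with an induction on $t$. Fix an information set $I$ of size $k$ and its complementary check set $T=\{1,\dots,n\}\setminus I$, so $|T|=r$. The first step is to reformulate what it means for $S$ to be a $\PD$-set purely in terms of $r$-subsets of $\{1,\dots,n\}$. Since a $\sigma\in\PAut(C)$ moves a set $E$ of error coordinates out of $I$ exactly when $\sigma(E)\subseteq T$, that is, when $E\subseteq\sigma^{-1}(T)$, and since $\sigma^{-1}(T)$ has size $r$ for every permutation $\sigma$, the $\PD$-set condition says precisely that every $t$-subset of $\{1,\dots,n\}$ is contained in at least one of the $r$-subsets in the family $\mathcal B=\{\sigma^{-1}(T):\sigma\in S\}$. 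Hence $|S|\geq|\mathcal B|\geq c(n,r,t)$, where $c(n,r,t)$ denotes the minimum number of $r$-subsets of an $n$-set needed to cover all its $t$-subsets. The key point is that this reduction uses only the existence of an information set and the permutation action, not linearity, which is exactly what lets the bound pass to systematic but nonlinear codes.

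It then remains to prove the Schönheim inequality $c(n,r,t)\geq\lceil\frac{n}{r}\lceil\frac{n-1}{r-1}\lceil\cdots\lceil\frac{n-t+1}{r-t+1}\rceil\cdots\rceil\rceil\rceil$. I would do this by induction on $t$, establishing the single recursive step
\[
c(n,r,t)\ \geq\ \left\lceil\frac{n}{r}\,c(n-1,r-1,t-1)\right\rceil,
\]
whose iteration, down to the base case $c(n-t+1,r-t+1,1)=\lceil(n-t+1)/(r-t+1)\rceil$ for covering singletons, yields the nested expression in (\ref{boundsizeIneq}). The base case is immediate: to cover all $n-t+1$ points of an $(n-t+1)$-set by $(r-t+1)$-subsets one needs at least $\lceil(n-t+1)/(r-t+1)\rceil$ blocks.

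For the recursive step I would fix an optimal covering family $\mathcal B$ and count incidences through a fixed point $x\in\{1,\dots,n\}$. Every $(t-1)$-subset $E'$ of $\{1,\dots,n\}\setminus\{x\}$ gives a $t$-subset $E'\cup\{x\}$, which must lie in some block $B\in\mathcal B$; necessarily $x\in B$, and $B\setminus\{x\}$ is an $(r-1)$-subset of $\{1,\dots,n\}\setminus\{x\}$ containing $E'$. Hence the blocks through $x$, with $x$ deleted, cover every $(t-1)$-subset of an $(n-1)$-set, so their number is at least $c(n-1,r-1,t-1)$. Summing this over all $n$ choices of $x$ counts each block exactly $r$ times (once for each of its points), giving $r\,|\mathcal B|\geq n\,c(n-1,r-1,t-1)$; dividing and taking the ceiling of the integer $|\mathcal B|$ proves the recursive step.

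I expect the conceptual crux to be the reformulation in the first paragraph: recognizing that the $\PD$-set property depends only on the sizes of the sets $\sigma^{-1}(T)$ and not on any algebraic structure of $C$, so that the purely combinatorial Schönheim bound applies verbatim. The double-counting and the induction are then routine; the only point requiring a little care is that the lower bound counts \emph{distinct} blocks, so coincidences $\sigma_1^{-1}(T)=\sigma_2^{-1}(T)$ with $\sigma_1\neq\sigma_2$ only strengthen the inequality $|S|\geq|\mathcal B|$ and cause no difficulty.
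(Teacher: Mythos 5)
Your proof is correct and complete. Note, however, that the paper does not actually write out an argument for this proposition: its ``proof'' consists of citing \cite{H}, where the bound is proved for linear codes, together with the single observation that linearity enters that proof only through the existence of an information set, so the argument carries over to systematic codes. That observation is precisely what you identify as the conceptual crux: the $\PD$-set property says that the $r$-sets $\sigma^{-1}(T)$, $\sigma \in S$, form a covering of the $t$-subsets of an $n$-set, which uses nothing about $C$ beyond the information set $I$ and the fact that the elements of $S$ are permutations. Your double-counting step $c(n,r,t) \geq \bigl\lceil \frac{n}{r}\, c(n-1,r-1,t-1) \bigr\rceil$ and the induction down to the singleton case are the classical Gordon--Sch\"onheim argument that the cited reference contains, so you follow the same route as the paper, but make it self-contained: the paper buys brevity by outsourcing the combinatorics, while your version is verifiable without the reference and makes transparent that neither linearity nor even membership of $S$ in $\PAut(C)$ plays any role in the counting.
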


The above inequality (\ref{boundsizeIneq}) is often called the {\em Gordon-Sch\"onheim bound}. The
result given by Proposition~\ref{boundsize} is quoted and proved for linear codes in \cite{H}. We can follow the same proof, since the linearity of the code is only used to guarantee that the code is systematic. In \cite{BeBoFeVi}, it is shown that $\mathbb Z_4$-linear codes are systematic, and a systematic encoding is given for these codes. Therefore, the result can be applied to any $\Z_4$-linear code, not necessarily linear.

The Gordon-Sch\"onheim bound can be adapted to $s$-$\PD$-sets for all $s$ up to the error correcting capability of the code. Note that the error-correcting capability of any binary linear or $\Z_4$-linear Hadamard code  of length $n=2^m$ is $t_m=\left \lfloor{(d-1)/2} \right \rfloor=\left \lfloor{(2^{m-1}-1)/2} \right \rfloor =2^{m-2}-1$ \cite{MSl}. Moreover, all these codes are systematic and have size $2n=2^{m+1}$. Therefore, the right side of the bound given by (\ref{boundsizeIneq}), for binary linear and $\mathbb Z_4$-linear Hadamard codes of length $2^m$ and for all $1\leq s \leq t_m$, becomes
\begin{equation} \label{function_gm}
g_m(s)=\left \lceil{\frac{2^m}{2^m-m-1}\left \lceil{\frac{2^m-1}{2^m-m-2}\left \lceil{\dots \left \lceil{\frac{2^m-s+1}{2^m-m-s}} \right \rceil } \right \rceil \dots } \right \rceil} \right \rceil .
\end{equation}
We compute the minimum value of $g_m(s)$ in the following lemma.
\begin{lemma}
\label{L1}
Let $m$ be an integer, $m\geq 4$. For $1\leq s \leq t_m$,
\begin{equation*}g_m(s)=\left \lceil{\frac{2^m}{2^m-m-1}\left \lceil{\frac{2^m-1}{2^m-m-2}\left \lceil{\dots \left \lceil{\frac{2^m-s+1}{2^m-m-s}} \right \rceil } \right \rceil \dots } \right \rceil} \right \rceil  \geq s+1,
\end{equation*}
where $t_m=2^{m-2}-1$ is the error-correcting capability of any binary linear and $\Z_4$-linear Hadamard code of length $2^m$.
\end{lemma}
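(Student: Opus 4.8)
The plan is to strip the nested ceiling from the innermost level outward and run a descending induction on the nesting depth. For $1\le j\le s$ write $f_j=\frac{2^m-j+1}{2^m-m-j}$, so that the $j$-th fraction counting from the outside is exactly $f_j$, and define the partial evaluations $P_s=\left\lceil f_s\right\rceil$ and $P_j=\left\lceil f_j\,P_{j+1}\right\rceil$ for $j=s-1,\dots,1$; by construction $g_m(s)=P_1$. Before starting the induction I would record two elementary facts about the $f_j$. First, every denominator is positive: since $s\le t_m=2^{m-2}-1$, the smallest denominator satisfies $2^m-m-s\ge 2^m-m-2^{m-2}+1=\tfrac34\,2^m-m+1>0$ for all $m\ge 4$, so each $f_j$ is a well-defined positive real. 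Second, each $f_j$ strictly exceeds $1$, because the numerator minus the denominator equals $(2^m-j+1)-(2^m-m-j)=m+1>0$.

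The engine of the argument is the following one-line observation: if $a$ is a positive integer with $a\ge k\ge 1$ and $f>1$, then $fa\ge fk>k$, and since $fa$ strictly exceeds the integer $k$ we get $\left\lceil fa\right\rceil\ge k+1$. I would apply this with $f=f_j$ at each level. Concretely, I claim $P_j\ge s-j+2$ for all $1\le j\le s$, proved by induction decreasing in $j$. The base case $j=s$ is $P_s=\left\lceil f_s\right\rceil\ge 2$, which holds because $f_s>1$. For the inductive step, assuming $P_{j+1}\ge s-j+1$, monotonicity of the ceiling together with the engine (taking $k=s-j+1\ge 1$ and $a=P_{j+1}$) gives $P_j=\left\lceil f_j\,P_{j+1}\right\rceil\ge s-j+2$, completing the induction.

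Evaluating the claim at $j=1$ yields $g_m(s)=P_1\ge s+1$, which is exactly the assertion of the lemma.

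I do not expect a serious obstacle here: the only place where the hypotheses $m\ge 4$ and $s\le t_m$ are genuinely used is in guaranteeing that all denominators $2^m-m-j$ stay positive, so that the fractions $f_j$ are meaningful and the ceiling is monotone in the expected direction; the bound $f_j>1$ then does all the real work, and no finer estimate on the size of $f_j$ is needed. The main care required is purely organizational --- keeping the indexing of the $s$ nested fractions consistent from the inside out, and phrasing the per-level step so that the strict inequality $f_j>1$ converts into the unit increase $k\mapsto k+1$ after applying the ceiling.
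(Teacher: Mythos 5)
Your proof is correct and follows essentially the same route as the paper's: the paper also observes that the innermost ceiling is at least $2$ (indeed exactly $2$) and that each successive stage of the nested ceiling increases the value by at least $1$, which is precisely your inductive step with the fact $f_j>1$. Your write-up simply makes the paper's brief argument fully rigorous, including the check that the denominators stay positive under the hypotheses $m\geq 4$ and $s\leq t_m$.
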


\begin{proof}
We need to prove that $g_m(s) \geq s+1$. This fact is clear, since the central term
\begin{equation*}
\displaystyle\left \lceil{\frac{2^m-s+1}{2^m-m-s}} \right \rceil =2
\end{equation*}
for all $s \in \{1, \dots, 2^{m-2}-1\}$, and in each stage of the ceiling function working from inside, $g_m(s)$ increases its value by at least 1. 
\end{proof}

The smaller the size of the $\PD$-set is, the more efficient permutation decoding becomes. Because of this, we will focus on the case when we have that $g_m(s)=s+1$. For each binary linear and $\Z_4$-linear Hadamard code of length $2^m$, $m\geq 4$, we define the following integer:
\begin{equation*}
f_{m}=\operatorname{max}\{s \; : \; 2 \leq s, \; g_m(s)=s+1\},
\end{equation*}
which represents the greater $s$ in which we can find $s$-$\PD$-sets of size $s+1$. The following result characterize this parameter from the value of $m$. Note that for $m=3$, since the error-correcting capability is $t_3=1$, the permutation decoding becomes unnecessary and we do not take it into account in the results.

\begin{lemma}
\label{bound}
Let $m$ be an integer, $m\geq 4$. Then, $f_{m}=\left \lfloor{\frac{2^m-m-1}{1+m}} \right \rfloor.$
\end{lemma}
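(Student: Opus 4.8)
The plan is to unfold the nested ceiling defining $g_m(s)$ in (\ref{function_gm}) into a single-step recursion and to track precisely when each ceiling contributes only the minimum increment of $1$ guaranteed by Lemma~\ref{L1}. Set $P_s=1$ and, for $i=s-1,\dots,0$, define
\[
P_i=\left\lceil \frac{2^m-i}{2^m-m-1-i}\,P_{i+1}\right\rceil ,
\]
so that $g_m(s)=P_0$. Using $\frac{2^m-i}{2^m-m-1-i}=1+\frac{m+1}{2^m-m-1-i}$, each step can be rewritten as
\[
P_i=P_{i+1}+\left\lceil \frac{(m+1)P_{i+1}}{2^m-m-1-i}\right\rceil ,
\]
which shows that every step increases the value by at least $1$ (recovering Lemma~\ref{L1}) and by \emph{exactly} $1$ if and only if $(m+1)P_{i+1}\le 2^m-m-1-i$.

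First I would prove sufficiency: if $s(m+1)\le 2^m-m-1$, then $g_m(s)=s+1$. Arguing from the inside out, assume $P_{i+1}=s-i$ (true initially, since $P_s=1$) and verify that the increment at level $i$ equals $1$; substituting $P_{i+1}=s-i$, the needed inequality $(m+1)(s-i)\le 2^m-m-1-i$ rearranges to $(m+1)s-mi\le 2^m-m-1$, whose left-hand side is maximal at $i=0$. Thus the single hypothesis $s(m+1)\le 2^m-m-1$ forces unit increments at all $s$ levels, giving $P_i=s-i+1$ and in particular $g_m(s)=P_0=s+1$.

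Next I would prove necessity: if $s(m+1)>2^m-m-1$, then $g_m(s)\ge s+2$. For this direction it suffices to use the lower bound $P_1\ge s$, which follows from Lemma~\ref{L1} applied to the inner block of $s-1$ ceilings starting from $P_s=1$ (each contributing at least $1$). Inserting $P_1\ge s$ into the outermost step yields $P_0\ge s+\left\lceil (m+1)s/(2^m-m-1)\right\rceil$, and the hypothesis $s(m+1)>2^m-m-1$ makes the remaining ceiling at least $2$, so $g_m(s)\ge s+2>s+1$.

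Combining both directions gives the equivalence $g_m(s)=s+1 \iff s\le (2^m-m-1)/(m+1)$, i.e.\ $s\le\lfloor (2^m-m-1)/(m+1)\rfloor$; taking the maximum over $s\ge2$ then yields $f_m=\lfloor (2^m-m-1)/(m+1)\rfloor$. It remains to observe that this value is at least $2$ for every $m\ge4$ (it already equals $2$ at $m=4$) and, since $(2^m-m-1)/(m+1)<2^{m}/4=2^{m-2}$ for $m\ge4$, does not exceed $t_m=2^{m-2}-1$, so the maximum is indeed attained inside the admissible range $2\le s\le t_m$. The main obstacle I anticipate is the bookkeeping in the necessity direction: since an inner ceiling could in principle jump by more than $1$, one cannot assume the exact values $P_{i+1}=s-i$ there, and the argument must rely on the one-sided estimate $P_1\ge s$ rather than on exact tracking. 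Verifying that this weaker information still pins down the outermost increment is the delicate point.
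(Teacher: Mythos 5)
Your proof is correct and follows essentially the approach the paper intends: the paper's own proof is only a pointer to Lemma~\ref{L1} and to the argument of Lemma~2 in \cite{FKM}, and your unfolding of the nested ceilings into unit increments, governed by the condition $(m+1)P_{i+1}\leq 2^m-m-1-i$, is precisely that argument carried out in full. Your one-sided bound $P_1\geq s$ in the necessity direction and the closing check that $2\leq \left\lfloor (2^m-m-1)/(m+1)\right\rfloor\leq t_m$ supply details the paper leaves to the cited reference.
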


\begin{proof}
The result is easy to prove by Lemma~\ref{L1} and following a similar argument as the one in the proof of Lemma 2 in \cite{FKM}. 
\end{proof}

\section{Finding $s$-$\PD$-sets of size $s+1$ for binary linear Hadamard codes}
\label{sec:binarylinear}

For any $m\geq 2$, there is an unique binary linear Hadamard code $H_m$ of length $2^m$ \cite{MSl}.
A generator matrix $G_{m}$ for $H_m$ can be constructed as follows:
\begin{equation}
\label{GH}
G_{m}= \left(\begin{array}{cc}
1 & \bf{1}    \\
\bf{0} & G'  \\
\end{array} \right),
 \end{equation}
where $G'$ is any matrix having as column vectors the $2^m-1$ nonzero vectors from $\mathbb Z_2^{m}$, with the vectors $e_i$, $i \in \{1, \dots, m\}$, in the first $m$ positions. Note that $G'$ can be seen as a generator matrix of the binary simplex code of length $2^m-1$.

By construction, from~(\ref{GH}), it is clear that $I_m=\{1,\ldots,m+1\}$ is an information set for  $H_m$. Let $w_i$ be the $i$th column vector of $G_{m}$, $i\in \{1,\ldots,2^m\}$. By labelling the coordinate positions with the columns of $G_{m}$, we can take as an information set $I_m$ for $H_m$ the first $m+1$ column vectors of $G_{m}$ considered as row vectors, that is, $I_m= \{w_1, \dots, w_{m+1}\}=\{e_1,e_1+e_2, \ldots, e_1+e_{m+1}\}$.
Then, depending on the context, $I_{m}$ will be taken as a subset of $\{1, \dots, 2^{m}\}$ or as a subset of $\{1\} \times \mathbb Z_2 ^{m}$.
\begin{example}
Let $H_4$ be the binary linear Hadamard code of length 16 with generator matrix
\begin{equation} \label{GH4}
G_{4}=\left(\begin{array}{cccccccccccccccc}
1 & 1 & 1 & 1 & 1 & 1 & 1 & 1 & 1 & 1 & 1 & 1 & 1 & 1 & 1 & 1 \\
0 & 1 & 0 & 0 & 0 & 1 & 0 & 0 & 1 & 1 & 0 & 1 & 0 & 1 & 1 & 1 \\
0 & 0 & 1 & 0 & 0 & 1 & 1 & 0 & 1 & 0 & 1 & 1 & 1 & 1 & 0 & 0 \\
0 & 0 & 0 & 1 & 0 & 0 & 1 & 1 & 0 & 1 & 0 & 1 & 1 & 1 & 1 & 0 \\
0 & 0 & 0 & 0 & 1 & 0 & 0 & 1 & 1 & 0 & 1 & 0 & 1 & 1 & 1 & 1 \\
\end{array} \right),
\end{equation}
constructed as in~(\ref{GH}). The set $I_4=\{1,2,3,4,5\}$, or equivalently the set of column vectors $I_4 =\{w_1, w_2, w_3, w_4, w_5\}=\{e_1,e_1+e_2,e_1+e_3,e_1+e_4,e_1+e_5\}$
of $G_{4}$, is an information set for $H_4$.
\end{example}

It is known that the permutation automorphism group $\PAut(H_m)$ of $H_m$ is isomorphic to the general affine group $\AGL(m, 2)$ \cite{MSl}.
Let $\GL(m, 2)$ be the general linear group over $\Z_2$. Recall that $\AGL(m, 2)$  consists of all mappings $\alpha: \mathbb Z_2 ^m \rightarrow \mathbb Z_2^m$ of the form $\alpha(x)=Ax+b$ for $x \in \mathbb Z_2^m$, where $A \in \GL(m, 2)$ and $b \in  \mathbb Z_2^m$, together with the function composition as the group operation. The monomorphism
\begin{equation*}
\begin{array}{cccc}
\varphi: & \AGL(m,2) & \longrightarrow & \GL(m+1, 2)   \\
 &  (b, A) & \longmapsto & \left(\begin{array}{cc} 1 & b \\  \bf{0} & A \\ \end{array} \right) \\
\end{array}
\end{equation*}
defines an isomorphism between $\AGL(m,2)$ and the subgroup of $\GL(m+1, 2)$ consisting of all nonsingular matrices whose first column is $e_1$. Therefore, from now on, we also regard $\PAut(H_m)$ as this subgroup. Note that any matrix $M\in \PAut(H_m)$ can be seen as a permutation of coordinate positions, that is, as an element of $\Sym(2^m)$. By multiplying each column vector $w_i$ of $G_m$ by $M$, we obtain another column vector $w_j=w_iM$, which means that the $i$th coordinate position moves to the $j$th coordinate position, $i,j\in \{1,\ldots,2^m\}$.

Let $M \in \PAut(H_m)$ and let $m_i$ be the $i$th row of $M$, $i\in \{1,\ldots,m+1\}$. We define $M^*$ as the matrix where the first row is $m_1$ and the $i$th row is $m_1+m_i$, $i\in \{2,\ldots,m+1\}$. An $s$-$\PD$-set of size $s+1$ for $H_m$ meets the Gordon-Sch\"onheim bound if $2\leq s \leq f_{m}$. The following theorem provides us a condition on sets of matrices of $\PAut(H_m)$ in order to be $s$-$\PD$-sets of size $s+1$ for $H_m$.

\begin{theorem}
\label{principal}
Let $H_m$ be the binary linear Hadamard code of length $2^m$, with $m\geq 4$. Let $P_s=\{M_i \; : \; 0 \leq i \leq s\}$ be a set of $s+1$ matrices in $\PAut(H_m)$. Then, $P_s$ is an $s$-$\PD$-set of size $s+1$ for $H_m$ with information set $I_m$ if and only if no two matrices $(M_i^{-1})^*$ and $(M_j^{-1})^*$ for $i \neq j$ have a row in common. Moreover, any subset $P_k \subseteq P_s$ of size $k+1$ is a $k$-$\PD$-set for $ k \in \{1, \dots, s\}$.
\end{theorem}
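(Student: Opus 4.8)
The plan is to reduce the statement to a purely combinatorial criterion on the preimages of the information set, and then to identify that criterion with the row condition on the matrices $(M_i^{-1})^*$. For $M\in\PAut(H_m)$ write $\sigma_M\in\Sym(2^m)$ for the associated coordinate permutation, which on the labels $w\in\{1\}\times\Z_2^m$ acts by $w\mapsto wM$. An $s$-set of positions $T$ is moved out of $I_m$ by $\sigma_M$ exactly when $\sigma_M(T)\cap I_m=\emptyset$, equivalently $T\cap\sigma_M^{-1}(I_m)=\emptyset$. Hence $P_s$ is an $s$-$\PD$-set if and only if no $s$-set $T$ meets every one of the sets $A_i:=\sigma_{M_i}^{-1}(I_m)$, $0\le i\le s$.

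Next I would isolate the combinatorial core as a small lemma: for a family of \emph{exactly} $s+1$ sets $A_0,\dots,A_s$, each of size $m+1=|I_m|$, no $s$-set meets all of them if and only if the $A_i$ are pairwise disjoint. The ``if'' direction is a pigeonhole argument, since choosing one element of $T$ in each $A_i$ would, under disjointness, force $s+1$ distinct points inside an $s$-set. The ``only if'' direction is an explicit construction: if $p\in A_i\cap A_j$ with $i\neq j$, then $p$ together with one representative from each of the remaining $A_l$ forms a set of size at most $s$ meeting every block. It is essential here that the family has exactly $s+1$ members, so that a single shared element lets $s$ points cover all blocks.

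Then I would compute $A_i$ explicitly. Writing $N=M_i^{-1}$ with rows $n_1,\dots,n_{m+1}$ and using $I_m=\{e_1,e_1+e_2,\dots,e_1+e_{m+1}\}$, one gets $\sigma_{M_i}^{-1}(I_m)=\{e_1N\}\cup\{(e_1+e_t)N:2\le t\le m+1\}=\{n_1\}\cup\{n_1+n_t:2\le t\le m+1\}$, which is precisely the set of rows of $(M_i^{-1})^*$. Since $N$ is nonsingular these rows are distinct, so $|A_i|=m+1$ as required, and each lies in $\{1\}\times\Z_2^m$ because the first column of $N$ is $e_1$. Thus $A_i\cap A_j=\emptyset$ means exactly that $(M_i^{-1})^*$ and $(M_j^{-1})^*$ share no row, and combining the three steps yields the claimed equivalence. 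The ``moreover'' clause is then immediate: if $\{(M_i^{-1})^*\}_{0\le i\le s}$ is pairwise row-disjoint, so is every subfamily of size $k+1$, and each such $P_k$ is a $k$-$\PD$-set by the same lemma applied with $s$ replaced by $k$.

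I would expect the main obstacle to be bookkeeping rather than depth: namely, getting the direction of the action right so that the \emph{inverse} matrices appear, i.e. verifying carefully that it is $\sigma_M^{-1}(I_m)$, and not $\sigma_M(I_m)$, that equals the row set of $(M^{-1})^*$ under the left-to-right label convention $w\mapsto wM$. Once the reformulation in terms of the $A_i$ is in place, the pigeonhole lemma and the identification with $(M_i^{-1})^*$ are routine.
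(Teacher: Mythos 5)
Your proposal is correct and follows essentially the same route as the paper's proof: the identification of $\sigma_{M_i}^{-1}(I_m)$ with the set of rows of $(M_i^{-1})^*$, the pigeonhole argument ($s+1$ matrices versus $s$ error positions) for one direction, and the construction of a bad set of size at most $s$ from a common row plus one representative of each remaining matrix for the other are exactly the steps the paper carries out. The only difference is organizational: you isolate the combinatorial core as a standalone disjointness lemma on $s+1$ sets, whereas the paper runs the same arguments inline on the error sets.
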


\begin{proof}
Suppose that the set $P_s=\{M_i \; : \;  0 \leq i \leq s\}$ satisfies that no two matrices $(M_i^{-1})^*$ and $(M_j^{-1})^*$ for $i \neq j$ have a row in common. Let $E=\{v_1, \dots, v_s\}\subseteq\{1\}\times \mathbb Z_2 ^{m}$ be a set of $s$ different column vectors of the generator matrix $G_{m}$ regarded as row vectors, which represents a set of $s$ error positions. Assume we cannot move all the error positions to the check set by any element of $P_s$. Then, for each $i \in \{0,\dots,s\}$, there is a $v \in E$ such that $vM_i \in  {I}_{m}$. In other words, there is at least an error position that remains in the information set ${I}_{m}$ after applying any permutation of $P_s$. Note that there are $s+1$ values for $i$, but only $s$ elements in $E$. Therefore, $vM_i\in {I}_{m}$ and $vM_j  \in {I}_{m}$ for some $v \in E$ and $i \neq j$.  Suppose $vM_i=w_r$ and $vM_j=w_t$, for $w_r, w_t \in {I}_{m}$. Then, $v= w_rM_i^{-1}=w_tM_j^{-1}$. Taking into account the form of the vectors in the information set ${I}_{m}=\{w_1, \dots, w_{m+1}\}$, by multiplying for such inverse matrices $M_i^{-1}$ and $M_j^{-1}$, we get the first row or a certain addition between the first row and another row of each matrix. Thus, we obtain that $(M_i^{-1})^*$ and $(M_j^{-1})^*$ have a row in common, contradicting our assumption. Let $P_k \subseteq P_s$ of size $k+1$. If this set satisfies the condition on the inverse matrices and we suppose that it is not a $k$-$\PD$-set, we arrive to a contradiction in the same way as before.

Conversely, suppose that the set $P_s=\{M_i \; : \; 0 \leq i \leq s\}$ forms an $s$-$\PD$-set for $H_m$, but does not satisfy the condition on the inverse matrices. Thus, some $v \in \{w_1,\ldots,w_{2^m}\}$ must be the $r$th row of $(M_i^{-1})^*$ and the  $t$th row of $(M_j^{-1})^*$ for some $r, t \in \{1, \dots, m+1\}$, $i, j \in \{0, \dots, s\}$. In other words, we have that $v=e_r(M_i^{-1})^*=e_t(M_j^{-1})^*$.
Therefore, $v=w_rM_i^{-1}=w_tM_j^{-1}$, where $w_r, w_t \in {I}_{m}$. Finally, we obtain that $vM_i=w_r$ and $vM_j=w_t$. These equalities implies that the vector $v$, which represents an error position, cannot be moved to the check set by the permutations defined by matrices $M_i$ and $M_j$. Let $
L=\{l \; : \; 0 \leq l \leq s, \ l\ne i,j\}$. For each $l\in L$, choose a row $v_l$ of $(M_l^{-1})^*$. It is clear that $v_l=e_t(M_l^{-1})^*=w_tM_l^{-1}$, so $v_lM_l=w_t \in  I_m$. Finally, since some of the $v_l$ may repeat, we obtain a set $E=\{v_l \; : \; l \in L\} \cup \{v\}$ of size at most $s$. Nevertheless, no matrix in $P_s$ will map every member of $E$ into the check set, fact that contradicts our assumption.  \end{proof}
We give now an explicit construction of an $f_m$-${\PD}$-set $\{M_0, \dots, M_{f_m}\}\subseteq \PAut(H_m)$ of minimum size $f_m$+1
for the binary linear Hadamard code $H_{m}$ of length $2^m$. We follow a similar technique to the one described for simplex codes in \cite{FKM}.
\begin{lemma}
\label{primitive}
Let $K=\mathbb Z_2[x]/(f(x))$, where $f(x)$ is a primitive polynomial of degree $m$. If $\alpha$ is a root of $f(x)$, then $\alpha^{i+1}-\alpha^i, \dots, \alpha^{i+m}-\alpha^i$ are linearly independent over $\Z_2$, for all $i \in \{0, \dots, 2^m-2\}$.
\end{lemma}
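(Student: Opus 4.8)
The plan is to reduce the general statement to the single case $i = 0$ and then exploit the fact that $K$ is the finite field $\mathbb{F}_{2^m}$, whose minimal polynomial over $\Z_2$ has degree exactly $m$. First I would write $\alpha^{i+j} - \alpha^i = \alpha^i(\alpha^j - 1)$ for $j \in \{1, \dots, m\}$. Since $\alpha$ is a root of an irreducible polynomial of degree $m \geq 2$, the power $\alpha^i$ is a nonzero element of the field $K$, so multiplication by $\alpha^i$ is an invertible $\Z_2$-linear endomorphism of $K$ regarded as an $m$-dimensional vector space over $\Z_2$. An invertible linear map preserves linear independence, so $\{\alpha^{i+1} - \alpha^i, \dots, \alpha^{i+m} - \alpha^i\}$ is linearly independent over $\Z_2$ if and only if $\{\alpha - 1, \alpha^2 - 1, \dots, \alpha^m - 1\}$ is. This removes the dependence on $i$ and leaves a single case to treat.

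For the case $i = 0$, I would assume a dependence relation $\sum_{j=1}^m c_j(\alpha^j - 1) = 0$ with $c_j \in \Z_2$ and show that all $c_j$ vanish. The key manipulation is to factor out $\alpha - 1$ via $\alpha^j - 1 = (\alpha - 1)(1 + \alpha + \cdots + \alpha^{j-1})$, which rewrites the relation as $(\alpha - 1)\, r(\alpha) = 0$, where $r(x) = \sum_{j=1}^m c_j(1 + x + \cdots + x^{j-1})$ is a polynomial over $\Z_2$ of degree at most $m - 1$. Because $\alpha$ is primitive and $m \geq 2$, we have $\alpha \neq 1$, so $\alpha - 1 \neq 0$; as $K$ is an integral domain, this forces $r(\alpha) = 0$.

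At this point the minimal polynomial $f$ of $\alpha$ over $\Z_2$ enters: it has degree $m$, and since $\deg r \leq m - 1 < m$, the equality $r(\alpha) = 0$ can only hold if $r$ is the zero polynomial. Finally, the coefficient structure of $r$ is triangular in the $c_j$ — the coefficient of $x^{m-1}$ in $r$ is exactly $c_m$, that of $x^{m-2}$ is $c_{m-1}$ once $c_m$ is known to vanish, and so on — so $r = 0$ forces $c_m = c_{m-1} = \cdots = c_1 = 0$, completing the argument.

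I expect the only genuine obstacle to be spotting the right reduction: once one factors out $\alpha - 1$ to drop the degree strictly below $m$, the minimal-polynomial bound does all the work, whereas a direct attack on the $m \times m$ coordinate system — expressing each $\alpha^j - 1$ in the basis $1, \alpha, \dots, \alpha^{m-1}$, which would require reducing $\alpha^m$ modulo $f$ — is considerably messier. I would also double-check that only the irreducibility and degree of $f$, together with $\alpha \neq 1$, are actually used, so that primitivity serves here merely to guarantee $\alpha \notin \Z_2$ for $m \geq 2$.
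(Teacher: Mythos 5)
Your proof is correct, and its first step coincides with the paper's: both reduce to the case $i=0$ by observing that multiplication by the nonzero element $\alpha^i$ of the field $K$ is an invertible $\Z_2$-linear map, hence preserves linear independence. For the case $i=0$, however, you take a genuinely different route. The paper argues in coordinates with respect to the basis $1,\alpha,\dots,\alpha^{m-1}$: it writes $\alpha^m-1=\sum_{j=1}^{m-1}\mu_j\alpha^j$ (the constant term cancels because $f(0)\neq 0$), notes that irreducibility also forces $f(1)\neq 0$ and hence an odd number of nonzero $\mu_j$, and then evaluates the determinant of the $m\times m$ matrix whose rows are the coordinate vectors of $\alpha-1,\dots,\alpha^m-1$, namely
\[
\left(\begin{array}{cc} \mathbf{1} & \operatorname{Id}_{m-1}\\ 0 & \mu \end{array}\right),
\]
obtaining $\sum_{j=1}^{m-1}\mu_j=1\neq 0$. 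You instead factor $\alpha^j-1=(\alpha-1)(1+\alpha+\cdots+\alpha^{j-1})$, cancel the nonzero factor $\alpha-1$ in the integral domain $K$ to get $r(\alpha)=0$ with $\deg r\leq m-1$, invoke the fact that the irreducible $f$ is the minimal polynomial of $\alpha$ (degree $m$) to force $r=0$, and finish with the triangular system $c_m=c_{m-1}=\cdots=c_1=0$. Your argument avoids coordinates and determinants altogether and isolates exactly what is needed (irreducibility via the minimal-polynomial degree bound, plus $\alpha\neq 1$), whereas the paper's is a concrete determinant computation whose use of irreducibility is the parity fact $f(1)=1$; both are elementary, and neither actually needs primitivity. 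One refinement of your closing remark: irreducibility of degree $m\geq 2$ alone already gives $\alpha\notin\Z_2$, so primitivity is not even needed to guarantee $\alpha\neq 1$; it matters only in the theorem that applies this lemma, where the powers $\alpha^i$ filling the matrices $N_0^*,\dots,N_{f_m}^*$ must be pairwise distinct.
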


\begin{proof} 
It is straightforward to see that $\alpha^{i+1}-\alpha^i, \dots, \alpha^{i+m}-\alpha^i$ are linearly independent over $\Z_2$, for all $i \in \{0, \dots, 2^m-2\}$, if and only if $\alpha-1, \dots, \alpha^m-1$ are linearly independent over $\Z_2$, since $\alpha^i \in K\backslash \{0\}$.

Note that $\alpha^m-1=\sum_{j=1}^{m-1}\mu_j \alpha^j$, where the sum has and odd number of nonzero terms, since $f(x)$ is irreducible. 
Let $\mathbf{\mu}=(\mu_1, \dots, \mu_{m-1})\in \Z_2^{m-1}$. Note that in vectorial notation $\alpha^j-1=e_1+e_ j$, $j \in \{1, \dots, m-1\}$ and $\alpha^m-1=\sum_{j=1}^{m-1} \mu_j e_{j+1}$. Finally, it is easy to see that the $m \times m$ binary  matrix
$$\left(\begin{array}{cc}
\mathbf{1}      & \operatorname{Id}_{m-1}  \\
0               & \mu                      \\
\end{array}\right),$$
which has as rows $\alpha-1, \dots, \alpha^m-1$, has determinant $\sum_{j=1}^{m-1}\mu_j =1 \neq 0$. \end{proof}
For $i \in \{1, \dots, f_m\}$, consider the following $(m+1) \times (m+1)$ binary matrices:
\begin{equation*}
N_0= \left(\begin{array}{c  c}
1         & 0 \\
0         & 1  \\
\vdots    & \vdots \\
0         & \alpha^{m-1} \\
\end{array} \right) \quad {\rm and}
\quad
N_i= \left(\begin{array}{c  c}
1      & \alpha^{(m+1)i-1} \\
0      & \alpha^{(m+1)i}- \alpha^{(m+1)i-1}  \\
\vdots & \vdots \\
0      & \alpha^{(m+1)i+m-1}- \alpha^{(m+1)i-1}   \\
\end{array} \right).
\end{equation*}

\begin{theorem}
Let $P_{f_m}=\{M_i \; : \; 0 \leq i \leq f_m\}$, where $M_ i=N_i^{-1}$. Then, $P_{f_m}$ is an $f_m$-${\PD}$-set of size $f_m+1$ for the binary linear Hadamard code $H_m$ of length $2^m$ with information set $I_m$.
\end{theorem}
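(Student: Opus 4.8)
The plan is to deduce the theorem directly from the criterion in Theorem~\ref{principal}. Since $M_i = N_i^{-1}$, we have $(M_i^{-1})^* = N_i^*$, so the whole proof reduces to checking two facts: that each $M_i$ really belongs to $\PAut(H_m)$, and that the matrices $N_0^*, \dots, N_{f_m}^*$ pairwise share no common row.

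For membership, I would note that every $N_i$ has first column $e_1$, so only nonsingularity needs to be verified. The lower-right $m \times m$ block of $N_0$ has rows $1, \alpha, \dots, \alpha^{m-1}$, which form a $\Z_2$-basis of $K$, and for $i \geq 1$ the corresponding block has rows $\alpha^{(m+1)i} - \alpha^{(m+1)i-1}, \dots, \alpha^{(m+1)i+m-1} - \alpha^{(m+1)i-1}$, which are linearly independent by Lemma~\ref{primitive} applied with index $(m+1)i - 1$; here one only has to confirm that this index stays within the range $\{0, \dots, 2^m-2\}$ required by the lemma, which is guaranteed by $i \leq f_m$ and the value of $f_m$ from Lemma~\ref{bound}. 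A nonsingular matrix whose first column is $e_1$ has an inverse of the same shape, so $M_i = N_i^{-1} \in \PAut(H_m)$ and $|P_{f_m}| = f_m + 1$ is immediate.

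The heart of the argument is computing $N_i^* = (M_i^{-1})^*$. The operation $M \mapsto M^*$ adds the first row to each of the other rows, and this is tailored precisely to cancel the repeated subtrahend $\alpha^{(m+1)i-1}$ in the second column of $N_i$. Carrying this out, I expect to find that for $i \geq 1$ the rows of $N_i^*$ are exactly $(1 \mid \alpha^e)$ with $e$ running through the interval $[(m+1)i - 1,\, (m+1)i + m - 1]$, while $N_0^*$ has the rows $(1 \mid \mathbf{0})$ and $(1 \mid \alpha^e)$ for $e \in \{0, \dots, m-1\}$. Thus every row has the form $(1 \mid \beta)$, and two rows agree if and only if their second entries do.

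Finally I would argue that these second entries are all distinct. The entry $\mathbf{0}$ appears only in $N_0^*$, so it is harmless. For the remaining rows it is enough that the exponents occurring across all $i$ are distinct modulo $2^m - 1$, the order of the primitive element $\alpha$. The exponent intervals are consecutive and non-overlapping, since $N_i^*$ ends at $(m+1)i + m - 1$ and $N_{i+1}^*$ starts at $(m+1)(i+1) - 1 = (m+1)i + m$; hence altogether the exponents fill $\{0, 1, \dots, (m+1)f_m + m - 1\}$. It then suffices that the largest exponent does not exceed $2^m - 2$, that is $(m+1)f_m \leq 2^m - m - 1$, which is exactly the defining inequality of $f_m = \lfloor (2^m - m - 1)/(1+m) \rfloor$ in Lemma~\ref{bound}. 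Consequently no exponent wraps around, all powers $\alpha^e$ involved are distinct, no two $N_i^*$ share a row, and Theorem~\ref{principal} gives the result. The only genuine obstacle is the careful bookkeeping needed to pin down $N_i^*$ and to confirm the exponent ranges neither overlap nor wrap modulo $2^m - 1$; the rest is a direct appeal to Lemmas~\ref{primitive} and \ref{bound} and Theorem~\ref{principal}.
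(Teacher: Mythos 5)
Your proposal is correct and follows essentially the same route as the paper: membership of the $N_i$ in $\PAut(H_m)$ via Lemma~\ref{primitive}, the observation that the rows of $N_0^*,\dots,N_{f_m}^*$ are exactly the vectors $(1,a)$ with $a\in\{0,1,\alpha,\dots,\alpha^{f_m(m+1)+m-1}\}$, distinctness from the primitivity of $\alpha$ together with the bound $f_m(m+1)+m-1\leq 2^m-2$, and then an appeal to Theorem~\ref{principal}. The only difference is that you spell out the interval bookkeeping for the exponents explicitly, which the paper leaves implicit.
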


\begin{proof}
Clearly, $N_0 \in  \PAut(H_m)$, since it is the identity matrix.
By Lemma~\ref{primitive}, $N_i \in \PAut(H_m)$ for all $i \in \{1, \dots, f_m\}$. Moreover, rows of matrices $N_0^*, \dots, N_{f_m}^*$ form the set $\{(1, a) \; : \; a \in \{0, 1, \alpha, \dots, \alpha^{f_m(m+1)+m-1}\}\}$. The elements of such set are different since $\alpha$ is primitive and $f_m(m+1)+m-1 \leq 2^m-2 $. Theorem~\ref{principal} completes the proof. \end{proof}

\begin{example}
\label{ejemplo}
Let $H_4$ be the binary linear Hadamard code of length 16 with generator matrix (\ref{GH4}). Let 
$K=\mathbb Z_2[x]/(x^4+x+1)$  and $\alpha$ a root of $x^4+x+1$. 
Matrices $N_0= \operatorname{Id_5}$,
\begin{equation*}
N_1=
\left(\begin{array}{cc}
1      & \alpha^{4}               \\
0      & \alpha^{5}-\alpha^{4}    \\
0      & \alpha^{6}-\alpha^{4}    \\
0      & \alpha^{7}-\alpha^{4}    \\
0      & \alpha^{8}-\alpha^{4}    \\
\end{array} \right)
=
\left(\begin{array}{cccccc}
1  & 1  & 1 & 0 & 0  \\
0  & 1 & 0 & 1  & 0   \\
0  & 1 & 1 & 1  & 1   \\
0  & 0 & 0& 0 & 1   \\
0  & 0 & 1 & 1 & 0 \\
\end{array} \right), \quad 
N_2=
\left(\begin{array}{cc}
1      & \alpha^{9}               \\
0      & \alpha^{10}-\alpha^{9}    \\
0      & \alpha^{11}-\alpha^{9}    \\
0      & \alpha^{12}-\alpha^{9}    \\
0      & \alpha^{13}-\alpha^{9}    \\
\end{array} \right)
=
\left(\begin{array}{cccccc}
1 & 0 & 1 & 0 & 1  \\
0 & 1 & 0 & 1 & 1   \\
0 & 0 & 0 & 1 & 0   \\
0 & 1 & 0 & 1 & 0   \\
0 & 1 & 1 & 1 & 0 \\
\end{array} \right),
\end{equation*}
where $\operatorname{Id}_5$ is the $5\times 5$ identity matrix, are elements of $\PAut(H_4)$ 
and $P_2=\{N_0^{-1}, N_1^{-1}, N_2^{-1}\}$
is a 2-${\PD}$-set of size 3 for $H_4$. It is straightforward to chech that matrices $N_0^*$,
\begin{equation*}
N_1^*=
\left(\begin{array}{cccccc}
1  & 1  & 1 & 0 & 0  \\
1  & 0 & 1 & 1  & 0   \\
1  & 0 & 0 & 1  & 1   \\
1  & 1 & 1 & 0 & 1   \\
1  & 1 & 0 & 1 & 0 \\
\end{array} \right), \quad {\rm and}
\quad
N_2^*=
\left(\begin{array}{cccccc}
1 & 0 & 1 & 0 & 1  \\
1 & 1 & 1 & 1 & 0   \\
1 & 0 & 1 & 1 & 1   \\
1 & 1 & 1 & 1 & 1   \\
1 & 1 & 0 & 1 & 1 \\
\end{array} \right),
\end{equation*}
have no rows in common. Finally, no $s$-$\PD$-set of size $s+1$ can be found for $s \geq 3$ since $f_{4}=2$.
\end{example}

Let $S$ be an $s$-$\PD$-set of size $s+1$. The set $S$ is a {\em nested} $s$-$\PD$-set if there is an ordering of the elements of $S$, $S=\{\sigma_0, \dots, \sigma_s\}$, such that $S_i=\{\sigma_0,\ldots, \sigma_i\} \subseteq S$ is an $i$-$\PD$-set of size $i+1$, for all $i \in \{0, \dots, s\}$. Note that $S_i \subset S_j$ if $0 \leq i < j \leq s$ and $S_s=S$. From Theorem~\ref{principal}, we have two important consequences. The first one is related to how to obtain nested $s$-$\PD$-sets and the second one provides another proof of Lemma~\ref{bound}.

\begin{corollary}
Let $m$ be an integer, $m\geq 4$. If $P_s$ is an $s$-$\PD$-set of size $s+1$ for the binary linear Hadamard code $H_m$, then any ordering of the elements of $P_s$ gives nested $k$-$\PD$-sets for $k \in \{1, \dots, s\}$.
\end{corollary}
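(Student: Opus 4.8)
The plan is to reduce the statement entirely to the characterization and the ``moreover'' clause already contained in Theorem~\ref{principal}, observing that the governing condition there is hereditary and order-free. Write $P_s=\{M_i : 0\le i\le s\}$. Since $P_s$ is an $s$-$\PD$-set of size $s+1$, Theorem~\ref{principal} ensures that no two of the matrices $(M_i^{-1})^*$, $(M_j^{-1})^*$ with $i\ne j$ share a row. The key point I would stress is that this is a purely pairwise condition on the elements of $P_s$: it is automatically inherited by every subset of $P_s$, and it makes no reference to any ordering of the $M_i$.

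First I would fix an arbitrary ordering $P_s=\{\sigma_0,\dots,\sigma_s\}$ and, for each $k\in\{1,\dots,s\}$, form the initial segment $S_k=\{\sigma_0,\dots,\sigma_k\}$, a subset of $P_s$ of size $k+1$. The ``moreover'' part of Theorem~\ref{principal} states that every subset of $P_s$ of size $k+1$ is a $k$-$\PD$-set; applying it to $S_k$ shows that $S_k$ is a $k$-$\PD$-set of size $k+1$. Because $S_1\subset S_2\subset\dots\subset S_s$ by construction and each link of this chain has the required type, the ordering exhibits the nested family of $k$-$\PD$-sets demanded by the definition of a nested $s$-$\PD$-set. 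As the ordering was arbitrary, the same holds for every ordering.

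I do not anticipate any genuine obstacle here; the content of the corollary is essentially bookkeeping on top of Theorem~\ref{principal}. The only subtlety worth making explicit is why \emph{every} ordering works rather than merely some ordering: this is forced precisely because the row-disjointness condition depends only on the unordered set of matrices $(M_i^{-1})^*$, so it descends to every initial segment regardless of how the elements were listed.
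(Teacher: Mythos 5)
Your proof is correct and follows exactly the route the paper intends: the corollary is stated as an immediate consequence of Theorem~\ref{principal}, whose ``moreover'' clause (any subset of $P_s$ of size $k+1$ is a $k$-$\PD$-set) applies to the initial segments of any ordering, precisely as you argue. Your added remark that the row-disjointness condition is pairwise, hence hereditary and order-independent, is the right justification for why \emph{every} ordering works.
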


\begin{corollary}
Let $m$ be an integer, $m \geq 4$. If $P_s$ is an $s$-$\PD$-set of size $s+1$ for the binary linear Hadamard code $H_m$, then  $s \leq \left \lfloor{\frac{2^m-m-1}{1+m}} \right \rfloor$.
\end{corollary}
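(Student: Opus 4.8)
The plan is to use Theorem~\ref{principal} to translate the hypothesis into a disjointness statement about matrix rows, and then close with a one-line counting argument. First I would observe that, since $P_s=\{M_i : 0\leq i\leq s\}$ is an $s$-$\PD$-set of size $s+1$, Theorem~\ref{principal} guarantees that no two of the $s+1$ matrices $(M_i^{-1})^*$, $0\leq i\leq s$, have a row in common.

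Next I would examine the shape of the rows of a single $(M_i^{-1})^*$. Because $\PAut(H_m)$ is a group, each $M_i^{-1}$ again lies in $\PAut(H_m)$, so it has the block form $\left(\begin{smallmatrix} 1 & b \\ \mathbf{0} & A \end{smallmatrix}\right)$ with $A\in\GL(m,2)$ and $b\in\Z_2^m$. By the definition of the $(\cdot)^*$ operation, the rows of $(M_i^{-1})^*$ are $(1,b)$ together with $(1,b+a_j)$, where $a_1,\dots,a_m$ denote the rows of $A$. Since $A$ is nonsingular, the $a_j$ are distinct and nonzero, so these $m+1$ rows are pairwise distinct and all lie in $\{1\}\times\Z_2^m$.

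Combining the two observations finishes the argument: the $(s+1)(m+1)$ rows contributed by $(M_0^{-1})^*,\dots,(M_s^{-1})^*$ are all distinct (distinct within each matrix, and no row shared between matrices) and all lie in $\{1\}\times\Z_2^m$, a set of exactly $2^m$ vectors. Hence $(s+1)(m+1)\leq 2^m$, which rearranges to $s\leq\frac{2^m-m-1}{m+1}$; as $s$ is an integer, $s\leq\left\lfloor\frac{2^m-m-1}{1+m}\right\rfloor$, as claimed.

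I expect the only real content to be the middle paragraph, namely checking that within each $(M_i^{-1})^*$ the $m+1$ rows are distinct and share the first coordinate $1$, since this is precisely what upgrades the pairwise ``no common row'' condition into a genuine count of distinct vectors in a space of size $2^m$. As a consistency remark, the same bound can also be obtained from the Gordon-Sch\"onheim inequality: Proposition~\ref{boundsize} adapted to $s$-$\PD$-sets gives $s+1=|P_s|\geq g_m(s)$, while Lemma~\ref{L1} gives $g_m(s)\geq s+1$, so $g_m(s)=s+1$ and hence $s\leq f_m=\left\lfloor\frac{2^m-m-1}{1+m}\right\rfloor$ by Lemma~\ref{bound}; the direct counting proof above, however, is more self-contained and explains why this corollary yields an independent proof of Lemma~\ref{bound}.
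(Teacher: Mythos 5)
Your proof is correct and follows essentially the same route as the paper: invoke Theorem~\ref{principal} to get $s+1$ matrices $(M_i^{-1})^*$ with pairwise disjoint row sets, observe that all rows lie in $\{1\}\times\mathbb{Z}_2^m$, and count to obtain $(s+1)(m+1)\leq 2^m$. Your middle paragraph merely makes explicit two facts the paper leaves implicit (that the $m+1$ rows within each $(M_i^{-1})^*$ are themselves distinct, and that they all begin with $1$), which is a welcome clarification but not a different argument.
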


\begin{proof}
Following the condition on sets of matrices to be $s$-$\PD$-sets of size $s+1$, given by Theorem~\ref{principal}, we have to obtain certain $s+1$ matrices with no rows in common. Note that the number of possible vectors of length $m+1$ over $\mathbb Z_2$ with 1 in the first coordinate is $2^m$. Thus, taking this fact into account and counting the number of rows of each one of these $s+1$ matrices, we have that $(s+1)(m+1) \leq 2^m$, so $s+1 \leq \frac{2^m}{m+1}$ and finally $s \leq \left \lfloor{\frac{2^m-m-1}{1+m}} \right \rfloor$. 
\end{proof}

\section{Recursive construction of $s$-$\PD$-sets for binary linear Hadamard codes}
\label{sec:binarylinearRecursive}

In this section, given an $s$-$\PD$-set of size $l$ for the binary linear Hadamard code $H_m$ of length $2^m$, where $l \geq s+1$, we show how to construct recursively an $s$-$\PD$-set of the same size for $H_{m'}$ of length $2^{m'}$ for all $m'>m$.

Given a matrix $M \in \PAut(H_m)$ and an integer $\kappa \geq 1$, we define the matrix $M(\kappa) \in \PAut(H_{m+\kappa})$ as
\begin{equation}
\label{M(v)}
M(\kappa)=\left(\begin{array}{cc}
M & \mathbf{0} \\
\mathbf{0} & \operatorname{Id}_{\kappa} \\
\end{array}\right),
\end{equation}
where $\operatorname{Id}_{\kappa}$ denotes the $\kappa \times \kappa$ identity matrix.

\begin{proposition}
\label{recursivo}
Let $m$ be an integer, $m \geq 4$, and $P_s=\{M_i \; : \; 0 \leq i \leq s\}$ be an $s$-$\PD$-set of size $s+1$ for $H_m$ with information set $I_m$. Then, $Q_s=\{(M^{-1}_i(\kappa))^{-1} \; : \; 0 \leq i \leq s\}$ is an $s$-$\PD$-set of size $s+1$ for $H_{m+\kappa}$ with information set $I_{m+\kappa}$, for any $\kappa \geq 1$.
\end{proposition}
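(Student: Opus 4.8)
The plan is to reduce Proposition~\ref{recursivo} to the matrix criterion of Theorem~\ref{principal}, which says that a set of $s+1$ matrices forms an $s$-$\PD$-set of size $s+1$ if and only if the associated starred inverse matrices have no two rows in common. So I would not argue directly about moving error sets out of the information set; instead I would track what the operation $M \mapsto (M^{-1}(\kappa))^{-1}$ does to the relevant starred matrices and invoke the theorem. Writing $Q_s = \{R_i : 0\le i \le s\}$ with $R_i = (M_i^{-1}(\kappa))^{-1}$, the goal becomes: no two matrices $(R_i^{-1})^*$ and $(R_j^{-1})^*$, $i\ne j$, share a row, given that the same holds for the matrices $(M_i^{-1})^*$ coming from the hypothesis on $P_s$.

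The first step is a bookkeeping simplification of $R_i^{-1}$. Since $R_i = (M_i^{-1}(\kappa))^{-1}$, we immediately get $R_i^{-1} = M_i^{-1}(\kappa)$. Now set $N_i = M_i^{-1}$, so that by the block definition~(\ref{M(v)}),
\begin{equation*}
R_i^{-1} = N_i(\kappa) = \left(\begin{array}{cc} N_i & \mathbf{0} \\ \mathbf{0} & \operatorname{Id}_{\kappa} \end{array}\right).
\end{equation*}
The second step is to compute $(R_i^{-1})^* = (N_i(\kappa))^*$ and compare it to $N_i^* = (M_i^{-1})^*$, which is exactly the matrix controlled by the hypothesis. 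Recall that the star operation replaces the first row by itself and every other row $m_k$ by $m_1 + m_k$. Because $N_i(\kappa)$ is block diagonal with an identity block appended, its first $m+1$ rows are those of $N_i$ padded on the right by $\kappa$ zeros, and its last $\kappa$ rows are $(\mathbf{0}\mid e_k)$-type rows with a single $1$ in the appended identity block. I would verify that applying the star operation commutes with this padding on the first $m+1$ rows, so that the top $m+1$ rows of $(N_i(\kappa))^*$ are exactly the rows of $N_i^*$ with $\kappa$ trailing zeros, while the bottom $\kappa$ rows are of the form (first row of $N_i^*$, padded) plus a unit vector supported entirely in the appended coordinates.

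The third and decisive step is the row-distinctness comparison. For two rows of $(R_i^{-1})^*$ and $(R_j^{-1})^*$ to coincide, their trailing $\kappa$-coordinate blocks must coincide. A top-block row has trailing block $\mathbf{0}$, so it can only match another top-block row; but two top-block rows (from index $i$ and $j$) agree if and only if the corresponding rows of $N_i^*$ and $N_j^*$ agree, which the hypothesis forbids. A bottom-block row from index $i$ has trailing block equal to some fixed unit vector $e_k$ in the appended coordinates, and its first $m+1$ entries are precisely the first row of $N_i^*$; such a row can only match a bottom-block row of $(R_j^{-1})^*$ with the same appended unit vector, forcing equality of the first rows of $N_i^*$ and $N_j^*$, again contradicting the hypothesis. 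The main obstacle — really the only place demanding care — is confirming that the star operation interacts cleanly with the block structure, in particular that the appended identity rows produce distinct trailing unit vectors and that the hypothesis on $P_s$ genuinely rules out every resulting coincidence; once that is pinned down, Theorem~\ref{principal} applied to $Q_s$ finishes the argument, and the information set $I_{m+\kappa}$ is the natural one since $N_0(\kappa)$ remains the identity.
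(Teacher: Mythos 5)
Your proposal is correct and follows essentially the same route as the paper: the paper's proof also reduces to the row criterion of Theorem~\ref{principal}, notes that $M_i^{-1}(\kappa)\in\PAut(H_{m+\kappa})$, and asserts that the matrices $(M_i^{-1}(\kappa))^*$ inherit row-distinctness from the $(M_i^{-1})^*$ --- a step the paper calls ``straightforward to check'' and which your block computation (top rows are the rows of $(M_i^{-1})^*$ padded by zeros, bottom rows are the first row of $(M_i^{-1})^*$ plus a trailing unit vector) fills in correctly.
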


\begin{proof}
Since $P_s$ is an $s$-$\PD$-set for $H_m$, matrices $(M^{-1}_1)^*, \dots, (M^{-1}_s)^*$ have no rows in common by Theorem~\ref{principal}. Therefore, it is straightforward to check that matrices $(M^{-1}_1(\kappa))^*, \dots,(M^{-1}_s(\kappa))^*$  have no rows in common either. Moreover, $M_i^{-1}(\kappa) \in \operatorname{PAut}(H_{m+\kappa}),$ for all $i \in \{1, \dots, s\}$. Thus, applying again Theorem~\ref{principal}, we have that $Q_s$ is an $s$-$\PD$-set for $H_{m+\kappa}$. 
\end{proof}

It is important to note that the bound $f_{{m+1}}$ for $H_{m+1}$ cannot be achieved recursively from an $s$-$\PD$-set for $H_{m}$, since the above recursive construction works for a given fixed $s$, increasing the length of the Hadamard code.


The above recursive construction only holds when the size of the $s$-$\PD$-set is exactly $s+1$. Now, we will show a second recursive construction which holds when the size of the $s$-$\PD$-set is any integer $l$, $l\geq s+1$.
In this case, the elements of  $\operatorname{PAut}(H_{m})$ will be regarded as permutations of coordinate positions, that is,
as elements of $ \operatorname{Sym}(2^m)$ instead of matrices of $\GL(m+1,2)$.

It is well known that a generator matrix $G_{m+1}$ for the binary linear Hadamard code $H_{m+1}$ of length $2^{m+1}$ can be constructed as follows:
\begin{equation}
\label{doubleZ2}
G_{m+1} =
\left(\begin{array}{c c}
 G_{m} &   G_{m} \\
\mathbf{0}  & \mathbf{1} \\
\end{array} \right),
\end{equation}
where $G_m$ is a generator matrix for the binary linear Hadamard code $H_{m}$ of length $2^{m}$. Given two permutations $\sigma_1 \in \operatorname{Sym}(n_1)$ and  $\sigma_2 \in \operatorname{Sym}(n_2)$, we define $(\sigma_1 | \sigma_2) \in \operatorname{Sym}(n_1+n_2)$, where  $\sigma_1$ acts on the coordinates $\{1, \dots, n_1\}$ and $\sigma_2$ on $\{n_1+1, \dots, n_1+n_2\}$.

\begin{proposition}
\label{doublepdsetZ2}
Let $m$ be an integer, $m \geq 4$, and $S$ be an $s$-$\PD$-set of size $l$ for $H_{m}$ with information set $I$. Then, $( S |  S)=\{(\sigma | \sigma) : \sigma \in S\}$ is an $s$-$\PD$-set of size $l$ for $H_{m+1}$ constructed from (\ref{doubleZ2}),
with any information set $I'=I \cup \{i+ 2^{m}\}, \; i\in I$.
\end{proposition}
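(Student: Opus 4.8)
The plan is to verify directly the three things packed into the statement: that each $(\sigma|\sigma)$ lies in $\PAut(H_{m+1})$, that $I'$ is genuinely an information set, and that $(S|S)$ satisfies the defining $s$-$\PD$ property, where throughout I read ``moved out of $I$'' as $\sigma(J)\cap I=\emptyset$. Since here $S$ has arbitrary size $l\ge s+1$ rather than exactly $s+1$, I would \emph{not} route through Theorem~\ref{principal} (whose $(M^{-1})^*$ criterion is tailored to size $s+1$); instead I argue straight from the definition. The starting point is the explicit description of the codewords of $H_{m+1}$ coming from~(\ref{doubleZ2}): encoding $(x,\lambda)$ gives $(u,\,u+\lambda\mathbf{1})$ with $u=xG_m\in H_m$, so the codewords are exactly the pairs $(u,v)$ with $u,v\in H_m$ and $u+v\in\{\mathbf{0},\mathbf{1}\}$. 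For $\sigma\in\PAut(H_m)$ one then computes $(\sigma|\sigma)(u,v)=(\sigma(u),\sigma(v))$, where $\sigma(u),\sigma(v)\in H_m$ and $\sigma(u)+\sigma(v)=\sigma(u+v)\in\{\mathbf{0},\mathbf{1}\}$ because a coordinate permutation fixes both $\mathbf{0}$ and $\mathbf{1}$; hence the image is again a codeword and $(\sigma|\sigma)\in\PAut(H_{m+1})$. Injectivity of $\sigma\mapsto(\sigma|\sigma)$ (read off the first block) then gives $|(S|S)|=l$.

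Next I would check that restriction to $I'=I\cup\{i+2^m\}$ separates all $2^{m+2}$ codewords. Take distinct $(u,v),(u',v')$. If $u\ne u'$ they already differ on $I$, since $I$ is an information set for $H_m$. If instead $u=u'$ but $v\ne v'$, then $v+v'=(u+v)+(u'+v')\in\{\mathbf{0},\mathbf{1}\}$ forces $v+v'=\mathbf{1}$, so $v$ and $v'$ differ in every coordinate, in particular at the position $i+2^m$. Thus $I'$, of size $|I|+1=m+2=\dim H_{m+1}$, is an information set.

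The heart of the argument is the $s$-$\PD$ property. Given a set $J$ of $s$ error positions, I split it as $J=J_1\cup J_2$ with $J_1$ in the first block and $J_2=\{j+2^m : j\in J_2'\}$ in the second, and form $J^{*}=J_1\cup J_2'\subseteq\{1,\dots,2^m\}$. The key observation is that $|J^{*}|\le|J_1|+|J_2'|=s$ (overlaps only shrink it), so the $s$-$\PD$ property of $S$ for $H_m$ supplies some $\sigma\in S$ with $\sigma(J^{*})\cap I=\emptyset$; if $|J^{*}|<s$ one first pads $J^{*}$ to an $s$-set, which is harmless. Now $(\sigma|\sigma)(J)=\sigma(J_1)\cup\bigl(2^m+\sigma(J_2')\bigr)$, and each intersection with $I'=I\cup\{i+2^m\}$ vanishes: the cross terms living in different blocks are empty for free, $\sigma(J_1)\cap I=\emptyset$ because $J_1\subseteq J^{*}$, and $i+2^m\notin 2^m+\sigma(J_2')$ because $\sigma(J_2')\cap I=\emptyset$ while $i\in I$. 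Hence $(\sigma|\sigma)$ moves all of $J$ out of $I'$, completing the verification.

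The step I expect to require the most care is this final reduction: recognizing that the single extra information coordinate $i+2^m$ can be neutralized \emph{simultaneously} with $J_1$ by folding the second-block part $J_2'$ back into the first block to form $J^{*}$, and using that its image avoids all of $I$ (hence, since $i\in I$, avoids $i$ as well). Everything else is bookkeeping about the block decomposition and the fixed-point behaviour of coordinate permutations on $\mathbf{0}$ and $\mathbf{1}$.
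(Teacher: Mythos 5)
Your proposal is correct and follows essentially the same route as the paper: your set $J^{*}=J_1\cup J_2'$ is exactly the support of the paper's folded vector $c$ (defined by $c_i=1$ iff $a_i=1$ or $b_i=1$), and the verifications that $(\sigma|\sigma)\in\PAut(H_{m+1})$ and that $I'=I\cup\{i+2^m\}$ is an information set match the paper's arguments, phrased via $H_{m+1}=\{(x,x),(x,\bar x):x\in H_m\}$. The only differences are cosmetic: you add the (harmless) padding remark for $|J^{*}|<s$ and check $|(S|S)|=l$ explicitly, which the paper leaves implicit.
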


\begin{proof}
Since $I$ is an information set for $H_m$, we have that $|(H_m)_I|=2^{m+1}$. Since $H_{m+1}$ is constructed from (\ref{doubleZ2}), it follows that $H_{m+1}=\{(x, x), (x, \bar x) : x \in H_{m}\}$, where $\bar x$ is the complementary vector of $x$. A vector and its complementary have different values in each coordinate, so $|(H_{m+1})_{I \cup \{i\}})|=2^{m+2}$, for all $i \in \{2^m+1, \dots, 2^{m+1}\}$. Thus, any set of the form $I'=I \cup \{i+ 2^{m}\},$ $i\in I$, is an information set for $H_{m+1}$.

If $\sigma \in \operatorname{PAut}(H_{m})$, then $\sigma(x)=z \in H_{m}$ for all $x \in H_{m}$. Therefore, since $(\sigma | \sigma)(x, x)=(z, z)$ and $(\sigma | \sigma)(x, \bar{x})=(z, z+ \sigma(\mathbf{1}))=(z, \bar{z})$, we can conclude that $(\sigma | \sigma) \in \operatorname{PAut}(H_{m+1})$.

Let $e=(a,b) \in \Z_2^{2n}$, where $a=(a_1, \dots, a_n)$, $b=(b_1, \dots, b_n) \in \mathbb Z_2^{n}$, and $n=2^m$. Finally, we will prove that for every $e \in \mathbb Z_2^{2n}$ with $\operatorname{wt}(e) \leq s$, there is $(\sigma | \sigma) \in (S | S)$ such that $(\sigma | \sigma)(e)_{I'} = \mathbf{0}$.
Let $c=(c_1,\dots,c_n)$ be the binary vector defined as follows: $c_i=1$ if and only if $a_i=1$ or $b_i=1$, for all $i\in \{1,\dots,n\}$.
Note that $\wt(c) \leq s$, since $\wt(e)\leq s$. Taking into account that $S$ is an $s$-$\PD$-set with respect to $I$, there is $\sigma \in S$ such that $\sigma(c)_{I}=\mathbf{0}$. Therefore, we also have that $(\sigma |\sigma)(a,b)_{I \cup J}=\mathbf {0}$, where $J=\{i+n: i \in I\}$. The result follows trivially since $I'\subseteq I \cup J$. 
\end{proof}

\section{Finding $s$-$\PD$-sets of size $s+1$ for Hadamard $\Z_4$-linear codes}
\label{sec:Z4linear}

For any $m\geq 3$ and each $\delta\in \{1, \dots, \left \lfloor \frac{m+1}{2} \right \rfloor\}$, there is an unique (up to equivalence) $\Z_4$-linear Hadamard code of length $2^{m}$ which is the Gray map image of a quaternary linear code of length $\beta=2^{m-1}$ and type $2^\gamma 4^\delta$, where $m=\gamma+2\delta-1$. Moreover, for a fixed $m$, all these codes are pairwise nonequivalent, except for $\delta=1$ and $\delta=2$, since these ones are equivalent to the binary linear Hadamard code of length $2^m$ \cite{K}. Therefore, the number of nonequivalent $\Z_4$-linear Hadamard codes of length $2^m$ is $\left \lfloor{\frac{m-1}{2}} \right \rfloor $ for all $m \geq 3$.
Note that when $\delta \geq 3$, the $\Z_4$-linear Hadamard codes are nonlinear.

Let $\mathcal H_{\gamma, \delta}$ be the quaternary linear Hadamard code of length $\beta=2^{m-1}$ and type $2^{\gamma}4^{\delta}$, where $m=\gamma+2\delta-1$, and let $H_{\gamma, \delta}=\Phi(\mathcal H_{\gamma, \delta})$ be the corresponding $\mathbb Z_4$-linear code of length $2\beta=2^m$.
A generator matrix $\mathcal G_{\gamma, \delta}$ for the code $\mathcal H_{\gamma, \delta}$ can be constructed by using the following recursive constructions:
\begin{align}
\label{double}
\mathcal G_{\gamma+1, \delta} & =
\left(\begin{array}{c c}
\mathcal G_{\gamma, \delta} &  \mathcal G_{\gamma, \delta} \\
\mathbf{0}  & \mathbf{2} \\
\end{array} \right),
\\
\label{quadruple}
\mathcal G_{\gamma, \delta+1} & = \left(\begin{array}{c c c c}
\mathcal G_{\gamma, \delta} &  \mathcal G_{\gamma, \delta}  &  \mathcal G_{\gamma, \delta}  &  \mathcal G_{\gamma, \delta}\\
\mathbf{0}  & \mathbf{1} & \mathbf{2}  & \mathbf{3}  \\
\end{array} \right),
\end{align}
starting from $\mathcal G_{0, 1}=(1)$. We first obtain $\mathcal G_{0, \delta}$ from $\mathcal G_{0,1}$ by using recursively $\delta$ times construction (\ref{quadruple}). Then, $\mathcal G_{\gamma, \delta}$ is managed from $\mathcal G_{0, \delta}$ by using $\gamma$ times construction (\ref{double}).
Note that the rows of order four remain in the upper part of $\mathcal G_{\gamma, \delta}$ while those of order two stay in the lower part.

A set $\mathcal I=\{i_1, \dots, i_{\gamma+\delta}\} \subseteq \{1,\ldots,\beta \}$ of $\gamma+\delta$ coordinate positions is said to be a {\em quaternary information set} for a quaternary linear code $\mathcal C$ of type $2^\gamma 4^\delta$ if $|\mathcal C_{\mathcal I}|=2^{\gamma}4^{\delta}$.  If the coordinates in $\mathcal I$ are ordered in such a way that $|\mathcal C_{\{i_1,\dots, i_{\delta}\}}|=4^{\delta}$, it is easy to see that the set $\Phi(\mathcal I)$, defined as
$$\Phi(\mathcal I)=\{2i_1-1, 2i_1, \dots, 2i_{\delta}-1, 2i_{\delta}, 2i_{\delta+1}-1, \dots, 2i_{\delta+\gamma}-1\},$$
is an information set for $C=\Phi(\mathcal C)$.
For example, the set $\mathcal I=\{1\}$ is a quaternary information set for $\mathcal H_{0,1}$, so $\Phi(\mathcal I)=\{1,2\}$ is an information set for $H_{0,1}=\Phi(\mathcal H_{0,1})$. In general, there is not an unique way to obtain a quaternary information set for the code
$\mathcal H_{\gamma,\delta}$. The following result provides a recursive and simple form to obtain such a set.

\begin{proposition}
\label{infoset}
Let  $\mathcal I$ be a quaternary information set for the quaternary linear Hadamard code $\mathcal H_{\gamma, \delta}$ of length $\beta=2^{m-1}$ and type $2^{\gamma}4^{\delta}$, where $m=\gamma+2\delta-1$. Then $\mathcal I \cup \{\beta+1\}$ is a quaternary information set for the codes $\mathcal H_{\gamma+1, \delta}$ and $\mathcal H_{\gamma, \delta+1}$, which are obtained from $\mathcal H_{\gamma, \delta}$ by applying (\ref{double}) and (\ref{quadruple}), respectively.
\end{proposition}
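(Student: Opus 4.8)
The plan is to reduce everything to one clean criterion: a set $\mathcal{J}$ of $\gamma'+\delta'$ coordinates is a quaternary information set for a quaternary linear code $\mathcal{C}$ of type $2^{\gamma'}4^{\delta'}$ if and only if the restriction map $c \mapsto c_{\mathcal{J}}$ is injective on $\mathcal{C}$. This holds because $|\mathcal{C}_{\mathcal{J}}| \leq |\mathcal{C}| = 2^{\gamma'}4^{\delta'}$ always, with equality precisely when no two codewords agree on $\mathcal{J}$. First I would note that $\beta+1 \notin \mathcal{I}$ since $\mathcal{I} \subseteq \{1, \dots, \beta\}$, so in both cases $\mathcal{I} \cup \{\beta+1\}$ has $\gamma+\delta+1$ elements; this equals $(\gamma+1)+\delta$ for $\mathcal{H}_{\gamma+1,\delta}$ and $\gamma+(\delta+1)$ for $\mathcal{H}_{\gamma,\delta+1}$, so the cardinality is correct and only injectivity of the restriction remains to be checked in each case.

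For construction (\ref{double}) I would describe a codeword of $\mathcal{H}_{\gamma+1,\delta}$ explicitly. The duplicated rows of $\mathcal{G}_{\gamma,\delta}$ contribute $(c,c)$ with $c \in \mathcal{H}_{\gamma,\delta}$, and the new order-two row $(\mathbf{0},\mathbf{2})$ contributes $b\,(\mathbf{0},\mathbf{2})$ with $b \in \{0,1\}$; hence every codeword equals $(c,\,c+b\,\mathbf{2})$. Because $\mathcal{I}$ lies entirely in the left copy $\{1,\dots,\beta\}$, restriction to $\mathcal{I}$ returns $c_{\mathcal{I}}$, while the coordinate $\beta+1$, being the first position of the right copy, returns $c_1+2b$. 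The key step is then immediate: if two codewords agree on $\mathcal{I} \cup \{\beta+1\}$, agreement on $\mathcal{I}$ forces $c=c'$ (since $\mathcal{I}$ is a quaternary information set for $\mathcal{H}_{\gamma,\delta}$), hence $c_1=c_1'$, and agreement at $\beta+1$ then gives $2b=2b'$, i.e. $b=b'$. Thus the restriction is injective.

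For construction (\ref{quadruple}) the argument runs in parallel. Every codeword of $\mathcal{H}_{\gamma,\delta+1}$ has the form $(c,c,c,c)+a\,(\mathbf{0},\mathbf{1},\mathbf{2},\mathbf{3})$ with $c \in \mathcal{H}_{\gamma,\delta}$ and $a \in \Z_4$, the latter being the coefficient of the new order-four row. Restriction to $\mathcal{I}$ again returns $c_{\mathcal{I}}$, and the coordinate $\beta+1$, the first position of the second of the four blocks, returns $c_1+a$. Agreement on $\mathcal{I} \cup \{\beta+1\}$ therefore forces $c=c'$ and then $a=a'$, so the restriction is injective here as well. In both cases injectivity together with the cardinality count yields $|\mathcal{C}_{\mathcal{I} \cup \{\beta+1\}}| = 2^{\gamma'}4^{\delta'}$, which is exactly what is required.

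I expect the proof to be essentially bookkeeping of the recursive generator structure, with no serious obstacle. The one point that genuinely needs care is verifying that the single added coordinate $\beta+1$ separates the coefficient of the freshly introduced generator ($b$ in the first case, $a$ in the second) once $c$ has been pinned down by $\mathcal{I}$; this is precisely where the nonzero entries $2$ and $1$ of the new row at position $\beta+1$ are used, and it is what makes $\beta+1$ a valid completion. I would also remark, although it is not needed for the statement itself, that placing $\beta+1$ among the order-four positions in the second case and among the order-two positions in the first produces an ordering of $\mathcal{I} \cup \{\beta+1\}$ compatible with the map $\Phi(\cdot)$ used later.
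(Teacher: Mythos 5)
Your proof is correct and takes essentially the same route as the paper's: both decompose $\mathcal{H}_{\gamma+1,\delta}=\{(u,u),(u,u+\mathbf{2})\}$ and $\mathcal{H}_{\gamma,\delta+1}=\{(u,u,u,u)+a(\mathbf{0},\mathbf{1},\mathbf{2},\mathbf{3})\}$ via the recursive constructions and observe that coordinate $\beta+1$ separates the coefficient of the new generator once $\mathcal{I}$ has pinned down $u$. The only cosmetic difference is that the paper argues by directly counting $|\mathcal{C}_{\mathcal{I}\cup\{i\}}|$ (noting $u$, $u+\mathbf{1}$, $u+\mathbf{2}$, $u+\mathbf{3}$ differ in every coordinate, and doing so for every $i$ in the relevant blocks, not just $i=\beta+1$), whereas you phrase the same fact as injectivity of the restriction map.
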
	

\begin{proof}
Since  $|\mathcal H_{\gamma+1, \delta}|=2^{\gamma+1}4^{\delta}$ and  $|\mathcal H_{\gamma, \delta+1}|=2^{\gamma}4^{\delta+1}$, it is clear that a quaternary information set for codes $\mathcal H_{\gamma+1, \delta}$ and $\mathcal H_{\gamma, \delta+1}$ should have $\gamma+\delta+1=|\mathcal I|+1$  coordinate positions.

Taking into account that $\mathcal H_{\gamma, \delta+1}$ is constructed from (\ref{quadruple}), we have that $\mathcal H_{\gamma , \delta+1}=\{(u, u, u, u), (u, u+\mathbf{1}, u+\mathbf{2}, u+\mathbf{3}), (u, u+ \mathbf{2}, u, u+\mathbf{2}), (u, u+\mathbf{3}, u+\mathbf{2}, u+\mathbf{1}) : u \in \mathcal H_{\gamma, \delta}\}$.  Vectors $u$, $u+\mathbf{1}, u+\mathbf{2}$, and $u+\mathbf{3}$  have different values in each coordinate, so $|(\mathcal H_{\gamma, \delta+1})_{\mathcal I \cup \{i\}}|=2^{\gamma}4^{\delta+1}$ for all $i \in \{\beta+1, \dots, 2\beta, 3\beta+1, \dots, 4\beta \}$. In particular, $\mathcal I \cup \{\beta+1\}$ is a quaternary information set for $\mathcal H_{\gamma, \delta+1}$.

A similar argument holds for $\mathcal H_{\gamma +1, \delta}$. Since $\mathcal H_{\gamma +1, \delta}$ is constructed from (\ref{double}), we have that $\mathcal H_{\gamma +1, \delta}=\{(u, u), (u, u+ \mathbf{2}) : u \in \mathcal H_{\gamma, \delta}\}$.  Vectors $u$ and $u+\mathbf{2}$  have different values in each coordinate, so $|(\mathcal H_{\gamma+1, \delta})_{\mathcal I \cup \{i\}}|=2^{\gamma+1}4^\delta$ for all $i \in \{\beta+1, \dots, 2\beta \}$. Therefore, $\mathcal I \cup \{\beta+1\}$ is a quaternary information set for $\mathcal H_{\gamma+1, \delta}$. 
\end{proof}

Despite the fact that the quaternary information set $\mathcal I \cup \{\beta+1\}$, given by Proposition~\ref{infoset}, is the same for $\mathcal H_{\gamma+1, \delta}$ and $\mathcal H_{\gamma, \delta+1}$, the information set for the corresponding binary codes $H_{\gamma+1, \delta}$ and $H_{\gamma, \delta+1}$ are $I'=\Phi(\mathcal I) \cup \{2\beta+1\}$ and $I''=\Phi(\mathcal I) \cup \{2\beta+1, 2\beta+2\}$, respectively. As for binary linear codes, we can label the $i$th coordinate position of a quaternary linear code $\mathcal C$, with the $i$th column of a generator matrix $\mathcal G$ of $\mathcal C$. Thus, any quaternary information  set $\mathcal I$ for $\mathcal C$ can also be considered as a set of vectors representing the positions in $\mathcal I$. 	Then, by Proposition \ref{infoset}, we have that the set $\mathcal I_{\gamma, \delta}=\{e_1, e_1+e_2, \dots, e_1+e_{\delta},  e_1+2e_{\delta+1}, \dots, e_1+2e_{\gamma+\delta}\}$
is a suitable quaternary information set for the code $\mathcal H_{\gamma,\delta}$. Depending on the context, $\mathcal I_{\gamma, \delta}$ will be considered as a subset of $\{1, \dots, \beta\}$ or as a subset of $\{1\} \times \mathbb Z_4 ^{\delta-1} \times \{0, 2\}^{\gamma}$.

\begin{example}
The quaternary linear Hadamard code $\mathcal H_{0,3}$ of length 16 can be generated by the matrix
$$
\mathcal G_{0,3}=\left(\begin{array}{c c c c c c c c c c c c c c c c}
1  & 1 & 1 & 1    &  1  & 1 & 1 & 1          & 1  & 1 & 1 &  1      & 1  & 1 & 1 &  1\\
0  & 1 & 2 & 3    &  0  & 1 & 2 & 3         & 0  & 1 & 2 & 3         & 0  & 1 & 2 & 3  \\
0  & 0 & 0 & 0    & 1 & 1 & 1 & 1           & 2 & 2 & 2 & 2          & 3 & 3 & 3 & 3\\
\end{array} \right),
$$
obtained by applying two times construction (\ref{quadruple}) over $\mathcal G_{0,1}=(1)$. The set $\mathcal I_{0,3}=\{1, 2, 5\}$, or equivalently the set of column vectors $\mathcal I_{0,3}=\{(1, 0, 0), (1, 1, 0), (1, 0, 1)\}$ of $\mathcal G_{0,3}$, is a quaternary information set for $\mathcal H_{0,3}$.  By applying constructions (\ref{double}) and (\ref{quadruple}) over $\mathcal G_{0,3}$, we obtain that matrices
\begin{align*}
\mathcal G_{1, 3} & =
\left(\begin{array}{c c}
\mathcal G_{0,3} &  \mathcal G_{0,3} \\
\mathbf{0}  & \mathbf{2} \\
\end{array} \right),
\\
\mathcal G_{0,4} & = \left(\begin{array}{c c c c}
\mathcal G_{0,3} &  \mathcal G_{0,3}  &  \mathcal G_{0,3}  &  \mathcal G_{0,3}\\
\mathbf{0}  & \mathbf{1} & \mathbf{2}  & \mathbf{3}  \\
\end{array} \right),
\end{align*}
generate the quaternary linear Hadamard codes $\mathcal H_{1,3}$ and $\mathcal H_{0,4}$ of length 32 and 64, respectively. By Propositions \ref{infoset}, it follows that $\mathcal I_{0,3} \cup \{17\}=\{1,2,5,17\}$  is a quaternary information set for $\mathcal H_{1,3}$ and $\mathcal H_{0,4}$. Despite the fact that the quaternary information set is the same for both codes $\mathcal H_{1,3}$ and $\mathcal H_{0,4}$, it is important to note that in terms of column vectors representing these positions, we have that $\mathcal I_{1,3}=\{(1, 0, 0, 0), (1, 1, 0, 0), (1, 0, 1, 0), (1, 0, 0, 2)\}$ and $\mathcal I_{0,4}=\{(1, 0, 0, 0), (1, 1, 0, 0),  (1, 0, 1, 0), (1, 0, 0, 1)\}$. Finally,   $I'=\Phi(\mathcal I_{0,3}) \cup \{33\}=$ $\{1, 2, 3, 4, 9, 10, 33\}$ and  $I''=\Phi(\mathcal I_{0,3}) \cup \{33, 34\}=\{1, 2, 3, 4, 9, 10, 33,34\}$ are information sets for the $\Z_4$-linear Hadamard codes $H_{1,3}$ and $H_{0,4}$, respectively.
\end{example}

Let $\mathcal C$ be a quaternary linear code of length $\beta$ and type $2^{\gamma}4^{ \delta}$, and let $C=\Phi(\mathcal C)$ be the corresponding $\mathbb Z_4$-linear code of length $2\beta$.  Let $\Phi: \operatorname{Sym}(\beta) \rightarrow \operatorname{Sym}(2\beta)$ be the map defined as
$$\Phi(\tau)(i)=\left\{\begin{array}{l  l}
2\tau(i/2), & \mathrm{if} \; i \; \mathrm{is \; even}, \\
2\tau((i+1)/2) -1 & \mathrm{if} \; i \; \mathrm{is \; odd}, \\
\end{array}\right.$$
for all $\tau \in \operatorname{Sym}(\beta)$ and  $i \in \{1, \dots, 2\beta \}$. Given a subset $\mathcal S \subseteq \operatorname{Sym}(\beta)$, we define the set $\Phi(\mathcal S)=\{\Phi(\tau) : \tau \in \mathcal S\}\subseteq \operatorname{Sym}(2\beta)$. It is easy to see that if $\mathcal S \subseteq \PAut(\mathcal C) \subseteq \Sym(\beta)$, then $\Phi(\mathcal S) \subseteq \operatorname{PAut}(C) \subseteq \operatorname{Sym}(2\beta)$.

Let $\GL(k, \Z_4)$ denote the general linear group of degree $k$ over $\Z_4$ and
let $\mathcal L$ be the set consisting of all matrices over $\Z_4$ of the following form:
$$\left(\begin{array}{ccc}
1 & \eta & 2\theta \\
\mathbf{0} & A & 2X \\
\mathbf{0} & Y & B \\
\end{array}\right), $$
where $A \in \GL(\delta-1, \mathbb Z_4), B \in \GL(\gamma, \mathbb Z_4),$ $X$ is a matrix over $\mathbb Z_4$ of size $(\delta-1) \times \gamma$, $Y$ is a matrix over $\mathbb Z_4$ of size $\gamma\times (\delta-1)$, $\eta \in \mathbb Z_4^{\delta-1}$ and $\theta\in \mathbb Z_4^{\gamma}$.

\begin{lemma}
\label{Lsubgroup}
The set $\mathcal L$ is a subgroup of $\operatorname{GL}(\gamma+ \delta, \mathbb Z_4)$.
\end{lemma}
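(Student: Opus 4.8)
The plan is to invoke the finite subgroup criterion: a nonempty finite subset of a group that is closed under the group operation is automatically a subgroup. Since $\mathcal L$ consists of matrices over the finite ring $\mathbb Z_4$ it is certainly finite, and it is nonempty because the identity matrix arises by taking $A=\Id_{\delta-1}$, $B=\Id_{\gamma}$ and $\eta,\theta,X,Y$ all zero. Hence it suffices to establish two facts: that $\mathcal L\subseteq\GL(\gamma+\delta,\mathbb Z_4)$, and that $\mathcal L$ is closed under matrix multiplication.

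First I would show every $M\in\mathcal L$ is invertible over $\mathbb Z_4$. A square matrix over $\mathbb Z_4$ is invertible if and only if its determinant is a unit, equivalently if and only if its reduction modulo $2$ is invertible over $\mathbb Z_2$. Reducing $M$ modulo $2$ annihilates the blocks $2\theta$ and $2X$, so $\bar M$ has first column $e_1$ and its complementary minor is block lower triangular with diagonal blocks $\bar A$ and $\bar B$, the reductions of $A$ and $B$. Expanding the determinant along the first column gives $\det(\bar M)=\det(\bar A)\det(\bar B)$, which is nonzero in $\mathbb Z_2$ since $A$ and $B$ are invertible over $\mathbb Z_4$. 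Thus $\det M$ is a unit and $\mathcal L\subseteq\GL(\gamma+\delta,\mathbb Z_4)$.

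The bulk of the work is closure under multiplication. Given $M_1,M_2\in\mathcal L$ with blocks indexed by the partition $1\mid(\delta-1)\mid\gamma$, I would carry out the block product $M_1M_2$ entry by entry. The structural points to verify are: the $(1,1)$ entry stays $1$ and the first column stays $e_1$; the blocks in the third block-column of the product, at positions $(1,3)$ and $(2,3)$, come out as $2(\theta_2+\eta_1X_2+\theta_1B_2)$ and $2(A_1X_2+X_1B_2)$, so they retain the factor $2$ (every contributing term already carries one, and in fact no product of two $2$-blocks ever occurs, so $2\cdot 2=0$ is never even needed); the off-diagonal blocks $\eta_3$ and $Y_3$ are unconstrained; and the diagonal blocks emerge as $A_3=A_1A_2+2X_1Y_2$ and $B_3=B_1B_2+2Y_1X_2$. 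It then remains to check that $A_3$ and $B_3$ lie in $\GL(\delta-1,\mathbb Z_4)$ and $\GL(\gamma,\mathbb Z_4)$, after which $M_1M_2$ clearly has the required form and lies in $\mathcal L$.

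The step I expect to require the most care is precisely this invertibility of the new diagonal blocks $A_3$ and $B_3$, since they are no longer simple products but carry the $2$-correction terms $2X_1Y_2$ and $2Y_1X_2$. Here I would reuse the modulo-$2$ reduction: these correction terms vanish, so $A_3\equiv\bar A_1\bar A_2$ and $B_3\equiv\bar B_1\bar B_2\pmod 2$, both invertible over $\mathbb Z_2$, whence $A_3$ and $B_3$ are invertible over $\mathbb Z_4$. With $\mathcal L$ nonempty, finite, contained in $\GL(\gamma+\delta,\mathbb Z_4)$, and closed under multiplication, the finite subgroup criterion completes the proof.
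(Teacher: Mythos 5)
Your proof is correct and follows essentially the same route as the paper: both establish containment in $\GL(\gamma+\delta,\mathbb Z_4)$ by noting that the blocks $2\theta$ and $2X$ vanish modulo $2$ (the paper phrases this as ``$\mathcal M'+2\mathcal R$ is invertible whenever $\mathcal M'$ is,'' which is the same fact), and both then conclude via closure under multiplication, with finiteness implicitly or explicitly supplying closure under inverses. The only difference is one of detail: you carry out the block product and the invertibility of $A_1A_2+2X_1Y_2$ and $B_1B_2+2Y_1X_2$ explicitly, which the paper dismisses as ``straightforward to check.''
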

\begin{proof}
We first need to check that $\mathcal{L}\subseteq \operatorname{GL}(\gamma+ \delta, \mathbb Z_4)$, in other words, that  $\det(\mathcal M)\in \{1, 3\}$ (that is, an unit of $\mathbb Z_4$) for all $\mathcal M \in \mathcal L$. Note that if $\mathcal M' \in \operatorname{GL}(k, \mathbb Z_4)$, then $\mathcal M =\mathcal M' +2 \mathcal R \in \operatorname{GL}(k, \mathbb Z_4)$. Thus, since $\operatorname{det}(\mathcal M') \in \{1,3\}$, we have that $\operatorname{det}(\mathcal M) \in \{1,3\}$, where
$$\mathcal M'=\left(\begin{array}{ccc}
1 & \eta & \mathbf{0} \\
\mathbf{0} & A & \mathbf{0} \\
\mathbf{0} & Y & B \\
\end{array}\right). $$
It is straightforward to check that $\mathcal M \mathcal N \in \mathcal L$ for all $\mathcal M, \mathcal N \in \mathcal L$. 
\end{proof}

Let $\zeta$ be the map from $\mathbb Z_4$ to $\mathbb Z_4$ which is the usual modulo two map composed with inclusion from $\mathbb Z_2$ to $\mathbb Z_4$, that is $\zeta(0)=\zeta(2)=0, \zeta(1)=\zeta(3)=1$. This map can be extended to matrices over $\mathbb Z_4$ by applying $\zeta$ to each one of their entries. Let $\pi$ be the map from $\mathcal L$ to $\mathcal L$ defined as
$$\pi(\mathcal M)=\left(\begin{array}{ccc}
1 & \eta & 2\theta \\
\mathbf{0} & A & 2X \\
\mathbf{0} & \zeta(Y) & \zeta(B) \\
\end{array}\right), $$
and let $\pi(\mathcal L)=\{\pi(\mathcal M): \mathcal M \in \mathcal L\} \subseteq \operatorname{GL}(\gamma+\delta, \mathbb Z_4)$. By Lemma~\ref{Lsubgroup}, it is clear that $\pi(\mathcal L)$ is a group with the operation $*$ defined as $\mathcal M * \mathcal N=\pi(\mathcal M \mathcal N)$ for all $\mathcal M, \mathcal N \in \pi(\mathcal L)$. By the proof of Theorem 2 in \cite{KVi}, it is easy to see that the permutation automorphism group $\PAut(\mathcal H_{\gamma,\delta})$ of $\mathcal H_{\gamma, \delta}$  is isomorphic to $\pi(\mathcal L)$. Thus, from now on, we identify $\operatorname{PAut}(\mathcal H_{\gamma, \delta})$ with this group. Recall that we can label the $i$th coordinate position of $\mathcal H_{\gamma, \delta}$ with the $i$th column vector $w_i$ of the generator matrix $\mathcal G_{\gamma, \delta}$ constructed via (\ref{double}) and (\ref{quadruple}), $i\in \{1,\ldots,\beta \}$. Therefore, again, any matrix $\mathcal M \in \PAut(\mathcal H_{\gamma, \delta})$ can be seen as a permutation of coordinate positions $\tau \in \operatorname{Sym}(\beta)$, such that $\tau(i)=j$ as long as $w_j=w_i\mathcal M$, $i,j \in \{1, \dots, \beta\}$. For any $\mathcal M \in \operatorname{PAut}(\mathcal H_{\gamma, \delta})$, we define $\Phi(\mathcal M)=\Phi(\tau)\in \operatorname{Sym}(2\beta)$, and for any $\mathcal P \subseteq \operatorname{PAut}(\mathcal H_{\gamma, \delta})$, we consider $\Phi(\mathcal P)=\{\Phi (\mathcal M): \mathcal M \in \mathcal P\} \subseteq \Sym(2\beta)$.

\begin{lemma}
\label{Auxiliar}
Let $\mathcal H_{\gamma, \delta}$ be the quaternary linear Hadamard code of length $\beta$ and type $2^{\gamma}4^{\delta}$ and let $\mathcal P \subseteq \operatorname{PAut}(\mathcal H_{\gamma, \delta})$. Then,  $\Phi(\mathcal P)$ is an $s$-$\PD$-set for $H_{\gamma, \delta}$ with information set $\Phi(\mathcal I_{\gamma, \delta})$ if and only if for every $s$-set $\mathcal E$ of column vectors of $\mathcal G_{\gamma, \delta}$
there is $\mathcal M \in \mathcal P$ such that $\{ g\mathcal M : g \in \mathcal E \} \cap \mathcal I_{\gamma, \delta}=\emptyset$.
\end{lemma}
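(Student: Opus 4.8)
The plan is to read the statement as an equivalence between two ``moving-out'' conditions: the binary $s$-$\PD$-set property of $\Phi(\mathcal P)$ with respect to the information set $\Phi(\mathcal I_{\gamma,\delta})$, and the quaternary moving-out property of $\mathcal P$ with respect to $\mathcal I_{\gamma,\delta}$ stated on the right-hand side. I would first fix notation: for $\mathcal M \in \PAut(\mathcal H_{\gamma,\delta})$ let $\tau \in \Sym(\beta)$ be the associated coordinate permutation, so that $w_{\tau(i)} = w_i\mathcal M$ and $\Phi(\mathcal M) = \Phi(\tau)$; recall that $\Phi(\mathcal P) \subseteq \PAut(H_{\gamma,\delta})$ is already known, so the $\PD$-set notion on the left is meaningful. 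The defining formula of $\Phi$ on $\Sym(\beta)$ gives at once that each quaternary position $i$ corresponds to the binary pair $\{2i-1, 2i\}$ and that $\Phi(\tau)$ carries it to $\{2\tau(i)-1, 2\tau(i)\}$; concretely $\Phi(\tau)(2i-1) = 2\tau(i)-1$ and $\Phi(\tau)(2i) = 2\tau(i)$. This ``pairwise'' action is the backbone of the whole argument.

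Next I would extract two elementary facts from the explicit form $\Phi(\mathcal I_{\gamma,\delta}) = \{2i_1-1, 2i_1, \dots, 2i_\delta-1, 2i_\delta, 2i_{\delta+1}-1, \dots, 2i_{\gamma+\delta}-1\}$, where $i_1,\dots,i_\delta$ index the order-four coordinates and $i_{\delta+1},\dots,i_{\gamma+\delta}$ the order-two coordinates of $\mathcal I_{\gamma,\delta}$. First, the odd binary positions in $\Phi(\mathcal I_{\gamma,\delta})$ are exactly $\{2k-1 : k \in \mathcal I_{\gamma,\delta}\}$, so $2k-1 \in \Phi(\mathcal I_{\gamma,\delta})$ if and only if $k \in \mathcal I_{\gamma,\delta}$. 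Second, the even binary positions are only $\{2i_1,\dots,2i_\delta\}$, a subset of $\{2k : k \in \mathcal I_{\gamma,\delta}\}$, so $k \notin \mathcal I_{\gamma,\delta}$ forces $2k \notin \Phi(\mathcal I_{\gamma,\delta})$. Combining these, $k \notin \mathcal I_{\gamma,\delta}$ is equivalent to $\{2k-1, 2k\} \cap \Phi(\mathcal I_{\gamma,\delta}) = \emptyset$.

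For the forward implication I would assume the binary $s$-$\PD$-set property and derive the quaternary condition. Given an $s$-set $\mathcal E$ of column vectors, lift it to the $s$-set $F = \{2i-1 : w_i \in \mathcal E\}$ of odd binary positions (injective in $i$). By hypothesis some $\mathcal M \in \mathcal P$ satisfies $\Phi(\mathcal M)(F) \cap \Phi(\mathcal I_{\gamma,\delta}) = \emptyset$, i.e.\ $2\tau(i)-1 \notin \Phi(\mathcal I_{\gamma,\delta})$ for every $w_i \in \mathcal E$; by the first fact this is $\tau(i) \notin \mathcal I_{\gamma,\delta}$, which is precisely $\{g\mathcal M : g \in \mathcal E\} \cap \mathcal I_{\gamma,\delta} = \emptyset$. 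For the converse I would start from an arbitrary $s$-set $F$ of binary positions and project it to the set $\mathcal E$ of column vectors $w_i$ with $\{2i-1, 2i\} \cap F \neq \emptyset$; then $|\mathcal E| \leq |F| = s$, and after padding $\mathcal E$ to an $s$-set if necessary (the right-hand condition, stated for all $s$-sets, then applies and moves the original $\mathcal E$ out) I obtain $\mathcal M \in \mathcal P$ with $\tau(i) \notin \mathcal I_{\gamma,\delta}$ for all $w_i \in \mathcal E$. Each $j \in F$ lies in a pair $\{2i-1, 2i\}$ with $w_i \in \mathcal E$, so by the combined fact $\Phi(\mathcal M)(j) \in \{2\tau(i)-1, 2\tau(i)\}$ avoids $\Phi(\mathcal I_{\gamma,\delta})$, whence $\Phi(\mathcal M)(F) \cap \Phi(\mathcal I_{\gamma,\delta}) = \emptyset$.

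The one genuine subtlety, and the step I expect to need the most care, is the asymmetry of $\Phi(\mathcal I_{\gamma,\delta})$ created by the order-two coordinates, which contribute only their odd binary position. This is exactly why the forward direction must use the odd representatives $2i-1$ (for which information-set membership is an honest ``if and only if''), and why the converse nonetheless closes: an escaped quaternary position $\tau(i) \notin \mathcal I_{\gamma,\delta}$ automatically expels \emph{both} binary positions $2\tau(i)-1$ and $2\tau(i)$, since even-position membership is only a subset condition. The remaining items — the size count $|\mathcal E| \le s$ and the reduction of sets of size smaller than $s$ to $s$-sets — are routine bookkeeping.
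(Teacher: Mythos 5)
Your proof is correct and follows essentially the same route as the paper: the forward direction lifts a quaternary $s$-set to its odd binary representatives $\{2i-1\}$ and pulls the conclusion back through the pairwise action of $\Phi$, while the converse projects a binary $s$-set onto the quaternary positions it touches, pads to size $s$, and uses that an escaped quaternary position expels both of its binary coordinates. Your explicit analysis of the odd/even asymmetry of $\Phi(\mathcal I_{\gamma,\delta})$ coming from the order-two coordinates is exactly the detail the paper compresses into ``by the definition of $\Phi$,'' so you have simply spelled out the same argument more carefully.
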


\begin{proof}
If $\Phi(\mathcal P)$ is an $s$-$\PD$-set with respect to the information set $\Phi(\mathcal I_{\gamma, \delta})$,
then for every $s$-set $E\subseteq \{1,\ldots,2\beta\}$, there is $\tau \in \mathcal P \subseteq \Sym(\beta)$ such that $\Phi(\tau)(E) \cap \Phi(\mathcal I_{\gamma, \delta}) = \emptyset$. For every $s$-set $\mathcal E \subseteq \{1,\ldots, \beta\}$, let $E_o=\{2i-1 : i \in \mathcal E\}$. We know that there is $\tau \in \mathcal P$ such that $\Phi(\tau)(E_o) \cap \Phi(\mathcal I_{\gamma, \delta}) = \emptyset$.
By the definition of $\Phi$,
we also have that $\tau(\mathcal E) \cap \mathcal I_{\gamma, \delta} = \emptyset$, which is equivalent to the statement.

Conversely, we assume that for every $s$-set $\mathcal E \subseteq \{1, \dots, \beta\}$, there is $\tau \in \mathcal P \subseteq \Sym(\beta)$ such that $\tau(\mathcal E) \cap \mathcal I_{\gamma, \delta}=\emptyset$. For every $s$-set $E\subseteq \{1,\ldots,2\beta\}$, let $\mathcal E_o$ be an $s$-set such that $\{ i : \varphi_1(i)\in E \textrm{ or } \varphi_2(i) \in E \} \subseteq \mathcal E_o$, where $\varphi_1(i)=2i-1$ and $ \varphi_2(i)=2i$. Since there is $\tau \in \mathcal P$ such that $\tau(\mathcal E_o) \cap \mathcal I_{\gamma, \delta}=\emptyset$, we have that $\Phi(\tau)(E) \cap \Phi(\mathcal I_{\gamma, \delta}) = \emptyset$. 
\end{proof}

Let $\mathcal M \in \PAut(\mathcal H_{\gamma, \delta})$ and let $m_i$ be the $i$th row of $\mathcal M$, $i\in \{1,\ldots,\delta+\gamma\}$. We define $\mathcal M^*$ as the matrix where the first row is $m_1$ and the $i$th row is $m_1+m_i$
for $i\in \{2,\ldots, \delta\}$ and $m_1+2m_i$ for $i\in \{\delta+1,\ldots, \delta+\gamma\}$.

\begin{theorem}
\label{principalZ4}
Let $\mathcal H_{\gamma, \delta}$ be the quaternary linear Hadamard code of type $2^{\gamma}4^{\delta}$.  Let $\mathcal P_s=\{\mathcal M_i \; : \; 0 \leq i \leq s\}$ be a set of $s+1$ matrices in $\PAut(\mathcal H_{\gamma, \delta})$. Then, $\Phi(\mathcal P_s)$ is an $s$-$\PD$-set of size $s+1$ for $H_{\gamma, \delta}$ with information set $\Phi(\mathcal I_{\gamma, \delta})$ if and only if no two matrices $(\mathcal M_i^{-1})^*$ and $(\mathcal M_j^{-1})^*$ for $i \neq j$ have a row in common.
\end{theorem}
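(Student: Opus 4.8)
The plan is to reduce the statement to a purely quaternary condition and then run the same pigeonhole-and-construction argument used for the binary case in Theorem~\ref{principal}. First I would invoke Lemma~\ref{Auxiliar}, which asserts that $\Phi(\mathcal P_s)$ is an $s$-$\PD$-set for $H_{\gamma,\delta}$ with information set $\Phi(\mathcal I_{\gamma,\delta})$ if and only if for every $s$-set $\mathcal E$ of column vectors of $\mathcal G_{\gamma,\delta}$ there is some $\mathcal M \in \mathcal P_s$ with $\{g\mathcal M : g \in \mathcal E\} \cap \mathcal I_{\gamma,\delta} = \emptyset$. This transports the whole problem from the Gray image back to the quaternary code, where it becomes a statement about whether the $s+1$ matrices of $\mathcal P_s$ can jointly clear every $s$-set of columns out of $\mathcal I_{\gamma,\delta}$.

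The crucial bridging observation I would establish next is that the rows of $(\mathcal M^{-1})^*$ are exactly the images $w_r \mathcal M^{-1}$ of the information-set vectors $w_r \in \mathcal I_{\gamma,\delta}$. This is forced by the shape of $\mathcal I_{\gamma,\delta} = \{e_1, e_1+e_2, \dots, e_1+e_\delta, e_1+2e_{\delta+1}, \dots, e_1+2e_{\gamma+\delta}\}$ together with the definition of the $*$ operation: writing $m_i$ for the $i$th row of $\mathcal M^{-1}$, we have $e_1\mathcal M^{-1}=m_1$; $(e_1+e_j)\mathcal M^{-1} = m_1 + m_j$ for $j \in \{2,\dots,\delta\}$; and $(e_1+2e_j)\mathcal M^{-1} = m_1 + 2m_j$ for $j \in \{\delta+1,\dots,\gamma+\delta\}$, which is precisely how the rows of $(\mathcal M^{-1})^*$ are built. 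With this identification in hand, the condition ``no two matrices $(\mathcal M_i^{-1})^*, (\mathcal M_j^{-1})^*$ share a row'' translates exactly into ``no vector $g$ lies in $\mathcal I_{\gamma,\delta}\mathcal M_i^{-1} \cap \mathcal I_{\gamma,\delta}\mathcal M_j^{-1}$''.

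For the forward direction I would argue by contradiction: assuming the row-disjointness condition but that the clearing property fails, there is an $s$-set $\mathcal E$ such that each $\mathcal M_i$ leaves at least one $g \in \mathcal E$ inside $\mathcal I_{\gamma,\delta}$. Since there are $s+1$ indices but only $s$ elements of $\mathcal E$, the pigeonhole principle yields a single $g$ with $g\mathcal M_i = w_r$ and $g\mathcal M_j = w_t$ for some $i \neq j$ and $w_r, w_t \in \mathcal I_{\gamma,\delta}$; hence $g = w_r\mathcal M_i^{-1} = w_t\mathcal M_j^{-1}$ is a common row of $(\mathcal M_i^{-1})^*$ and $(\mathcal M_j^{-1})^*$, a contradiction. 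The converse reverses this: a shared row $g$ produces indices $i \neq j$ with $g\mathcal M_i, g\mathcal M_j \in \mathcal I_{\gamma,\delta}$, and then, selecting one row of each remaining $(\mathcal M_l^{-1})^*$, I would assemble an $s$-set $\mathcal E$ (of size at most $s$ after collapsing repetitions) that no matrix of $\mathcal P_s$ can clear, contradicting Lemma~\ref{Auxiliar}. I expect the main obstacle to be the bridging computation of the second paragraph: one must verify carefully that the factor-two behaviour of the order-two generators $e_1+2e_j$ and of the corresponding rows $m_1+2m_j$ is respected, and that the identification of $(\mathcal M^{-1})^*$ with the image of $\mathcal I_{\gamma,\delta}$ survives the twisted group operation $*$ and the reduction map $\pi$ on $\PAut(\mathcal H_{\gamma,\delta})$; once this is settled, the combinatorial core is identical to that of Theorem~\ref{principal}.
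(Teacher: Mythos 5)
Your proposal is correct and follows exactly the paper's route: the paper proves Theorem~\ref{principalZ4} precisely by invoking Lemma~\ref{Auxiliar} to reduce to the quaternary setting and then repeating the pigeonhole/construction argument of Theorem~\ref{principal}, which is what you do. Your explicit verification that the rows of $(\mathcal M^{-1})^*$ are the vectors $w\mathcal M^{-1}$ for $w \in \mathcal I_{\gamma,\delta}$ (including the factor-two behaviour on the order-two generators) is exactly the detail the paper leaves implicit in the phrase ``following a similar argument.''
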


\begin{proof}
By Lemma~\ref{Auxiliar} and following a similar argument as in the proof of Theorem~\ref{principal}. 
\end{proof}


\begin{corollary}
Let $\mathcal P_s$ be a set of $s+1$ matrices in $\PAut(\mathcal H_{\gamma, \delta})$.
If $\Phi(\mathcal P_s)$ is an $s$-$\PD$-set of size $s+1$ for $H_{\gamma, \delta}$, then any ordering of elements in $\Phi(\mathcal P_s)$ provides nested $k$-$\PD$-sets for $k \in \{1, \dots, s\}$.
\end{corollary}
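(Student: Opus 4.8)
The plan is to reduce the statement entirely to the pairwise ``no common row'' criterion of Theorem~\ref{principalZ4}, and then to exploit the fact that this criterion is automatically inherited by subsets. Write $\mathcal P_s=\{\mathcal M_i \; : \; 0\leq i\leq s\}$. The first step is to invoke Theorem~\ref{principalZ4} in the forward direction: since $\Phi(\mathcal P_s)$ is an $s$-$\PD$-set of size $s+1$ for $H_{\gamma,\delta}$ with information set $\Phi(\mathcal I_{\gamma,\delta})$, the $s+1$ matrices $(\mathcal M_0^{-1})^*, \dots, (\mathcal M_s^{-1})^*$ pairwise have no row in common.

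Next I would fix an arbitrary ordering of the elements of $\Phi(\mathcal P_s)$, say $\Phi(\mathcal M_{i_0}), \dots, \Phi(\mathcal M_{i_s})$, and for each $k\in\{0,\dots,s\}$ consider the prefix $\mathcal P_k=\{\mathcal M_{i_0}, \dots, \mathcal M_{i_k}\}$, a set of $k+1$ matrices in $\PAut(\mathcal H_{\gamma,\delta})$. The key observation is that the associated matrices $(\mathcal M_{i_0}^{-1})^*, \dots, (\mathcal M_{i_k}^{-1})^*$ form a sub-collection of the matrices from the previous step, which already pairwise share no row; hence this smaller collection trivially pairwise shares no row as well. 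This hereditary property is the only real content of the argument, and it is immediate since ``having no row in common'' is a pairwise condition preserved under passing to subsets.

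I would then apply Theorem~\ref{principalZ4} a second time, now to the set $\mathcal P_k$ in the reverse direction: the ``no common row'' hypothesis being satisfied, we conclude that $\Phi(\mathcal P_k)$ is a $k$-$\PD$-set of size $k+1$ for $H_{\gamma,\delta}$. The one point to record carefully is that Theorem~\ref{principalZ4} is phrased for a set of $s+1$ matrices, but it holds verbatim for any size $k+1$ (reading $k$ in place of $s$), so this reapplication is legitimate. Since the conclusion holds for every prefix, the chain $\Phi(\mathcal P_0)\subseteq \Phi(\mathcal P_1)\subseteq\cdots\subseteq \Phi(\mathcal P_s)$ witnesses, by the definition of a nested $s$-$\PD$-set, that this ordering yields nested $k$-$\PD$-sets for all $k\in\{1,\dots,s\}$. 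As the ordering was arbitrary, the claim follows. There is no genuine obstacle here beyond recording the subset-heredity observation; the proof is essentially a two-line invocation of Theorem~\ref{principalZ4} applied once to $\mathcal P_s$ and once to each prefix, exactly paralleling the corresponding corollary for binary linear Hadamard codes.
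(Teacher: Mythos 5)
Your proof is correct and follows essentially the same route as the paper: the paper states this corollary as an immediate consequence of Theorem~\ref{principalZ4}, the intended argument being exactly yours --- apply the ``no common row'' criterion in the forward direction to $\mathcal P_s$, observe that this pairwise condition is inherited by every subset, and reapply the criterion (with $k$ in place of $s$) to each prefix, just as the ``Moreover'' clause in the proof of Theorem~\ref{principal} does for the binary case.
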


\begin{corollary}
Let $\mathcal P_s$ be a set of $s+1$ matrices in $\PAut(\mathcal H_{\gamma, \delta})$.
If $\Phi(\mathcal P_s)$ is an $s$-$\PD$-set of size $s+1$ for  $H_{\gamma, \delta}$,
then $s \leq f_{\gamma, \delta}$, where $$f_{\gamma, \delta}=\left \lfloor{\frac{2^{\gamma+2\delta-2}-\gamma-\delta}{\gamma+\delta}} \right \rfloor.$$
\end{corollary}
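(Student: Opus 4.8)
The plan is to reproduce, in the quaternary setting, the counting argument used for the binary Corollary, now appealing to Theorem~\ref{principalZ4} in place of Theorem~\ref{principal}. First I would translate the hypothesis into a statement about rows: since $\Phi(\mathcal P_s)$ is an $s$-$\PD$-set of size $s+1$, Theorem~\ref{principalZ4} guarantees that no two of the $s+1$ matrices $(\mathcal M_0^{-1})^*, \dots, (\mathcal M_s^{-1})^*$ share a row. Each of these matrices has $\gamma+\delta$ rows, and within a single matrix the rows are pairwise distinct (being images of distinct vectors under an invertible map, as explained below), so the hypothesis forces all $(s+1)(\gamma+\delta)$ rows to be pairwise distinct.

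The core step is to pin down the ambient set in which all these rows live and to count it. By the definition of $\mathcal M^*$, the rows of $(\mathcal M^{-1})^*$ are exactly the vectors $g\,\mathcal M^{-1}$ for $g \in \mathcal I_{\gamma,\delta}$; indeed the first row is $e_1 \mathcal M^{-1}$, the rows indexed by $j\in\{2,\dots,\delta\}$ are $(e_1+e_j)\mathcal M^{-1}$, and those indexed by $j\in\{\delta+1,\dots,\delta+\gamma\}$ are $(e_1+2e_j)\mathcal M^{-1}$, matching precisely the description of $\mathcal I_{\gamma,\delta}$. I would then verify that $\PAut(\mathcal H_{\gamma,\delta})=\pi(\mathcal L)$ stabilizes the set $\{1\}\times\mathbb{Z}_4^{\delta-1}\times\{0,2\}^{\gamma}$ that contains $\mathcal I_{\gamma,\delta}$: multiplying a vector $(1,v',v'')$ with $v''\in\{0,2\}^{\gamma}$ on the right by a matrix of the block form defining $\mathcal L$ leaves the first coordinate equal to $1$ (the first column is $e_1$) and keeps the last $\gamma$ coordinates in $\{0,2\}^{\gamma}$, since every contribution to them carries a factor $2$, either from the $2\theta$ and $2X$ blocks or from $v''$ itself. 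Hence every row of every $(\mathcal M_i^{-1})^*$ lies in this set.

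The count is then immediate: $|\{1\}\times\mathbb{Z}_4^{\delta-1}\times\{0,2\}^{\gamma}| = 4^{\delta-1}2^{\gamma} = 2^{\gamma+2\delta-2} = \beta$. Since the $(s+1)(\gamma+\delta)$ rows are pairwise distinct and all lie in a set of size $2^{\gamma+2\delta-2}$, we obtain $(s+1)(\gamma+\delta) \le 2^{\gamma+2\delta-2}$, whence $s \le \frac{2^{\gamma+2\delta-2}-\gamma-\delta}{\gamma+\delta}$, and taking the floor yields $s\le f_{\gamma,\delta}$.

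I expect the main obstacle to be the invariance claim in the second paragraph, namely establishing cleanly that $\pi(\mathcal L)$ stabilizes $\{1\}\times\mathbb{Z}_4^{\delta-1}\times\{0,2\}^{\gamma}$, which is what confines the rows to a space of size exactly $\beta$ rather than the full $4^{\gamma+\delta-1}$; this is precisely where the mixed $2^{\gamma}4^{\delta}$ structure of the code enters the bound. Everything else is bookkeeping: identifying the rows of $\mathcal M^*$ with the images of $\mathcal I_{\gamma,\delta}$ and applying the final pigeonhole count.
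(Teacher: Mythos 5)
Your proof is correct and follows essentially the same route as the paper's: invoke Theorem~\ref{principalZ4} to get $s+1$ matrices with no rows in common, observe that every row lies in $\{1\}\times\mathbb{Z}_4^{\delta-1}\times\{0,2\}^{\gamma}$, and conclude by pigeonhole that $(s+1)(\gamma+\delta)\leq 2^{\gamma+2\delta-2}$. The only difference is that you explicitly justify two facts the paper asserts without proof, namely that right multiplication by elements of $\pi(\mathcal L)$ stabilizes this set of size $2^{\gamma+2\delta-2}$ and that the rows within a single matrix are pairwise distinct, which is a welcome sharpening rather than a new approach.
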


\begin{proof}
Following the condition on sets of matrices to be $s$-$\PD$-sets of size $s+1$, given by Theorem~\ref{principalZ4}, we have to obtain certain $s+1$ matrices with no rows in common. Since the rows of length $\delta+\gamma$ must have 1 in the first coordinate, and elements from $\{0,2\}$ in the last $\gamma$ coordinates, the number of possible rows is $4^{\delta-1}2^\gamma=2^{\gamma+2\delta-2}$. Thus, taking this fact into account and counting the number of rows of each one of these $s+1$ matrices, we have that $(s+1)(\gamma+\delta) \leq 2^{\gamma+2\delta-2}$, and the result follows. 
\end{proof}


We give now an explicit construction of an $f_{0, \delta}$-${\PD}$-set of size $f_{0, \delta}+1$ for $H_{0, \delta}$. Let $\mathcal R=\operatorname{GR}(4^{\delta-1})$ be the Galois extension of dimension $\delta-1$ over $\Z_4$. It is known that $\mathcal R$ is isomorphic to $\Z_4[x]/(h(x))$, where $h(x)$ is a monic basic irreducible polynomial of degree $\delta-1$. 
Let $f(x)\in \mathbb Z_2[x]$ be a primitive polynomial of degree $\delta-1$. Let $\ell=2^{\delta-1}-1$. There is a unique primitive basic irreducible polynomial $h(x)$ dividing $x^{\ell}-1$ in $\mathbb Z_4[x]$. Let $\T=\{0, 1, \alpha, \dots, \alpha^{\ell-1}\}\subseteq \mathcal R$, where $\alpha $ is a root of $h(x)$.
It is well known that any $r\in \R$ can be written uniquely as $r=a+2b,$ where $a, b \in \T$. We take $\R$ as the following ordered set:
$$\begin{array}{rl}
\R  =  & \{r_1, \dots, r_{4^{\delta-1}} \} \\
    =  & \{0+ 2 \cdot 0,     \dots, \alpha^{\ell-1}+2 \cdot 0,
                  \dots,
   0+2 \cdot \alpha^{\ell-1} , \dots, \alpha^{\ell-1}+2\cdot      \alpha^{\ell-1}\}. \\
\end{array} $$
Since $|\mathcal R |/\delta=f_{0,\delta}+1$, we can form $f_{0, \delta}+1$ disjoints sets of $\mathcal R$ of size $\delta$.
For all $i \in \{0, \dots, f_{0,\delta}\}$, we consider the $\delta \times \delta$ quaternary matrix
\begin{equation*}
\mathcal N_i^*= \left(\begin{array}{c c}
1         & r_{\delta i+1} \\
\vdots    & \vdots \\
1         & r_{\delta (i+1)} \\
\end{array} \right).
\end{equation*}

\begin{theorem}
\label{construccionZ4}
Let $\mathcal P_{f_{0, \delta}}=\{\mathcal M_i \; : \; 0 \leq i \leq f_{0,\delta}\}$, where $\mathcal M_ i=\mathcal N_i^{-1}$. Then, $\Phi(\mathcal P_{f_{0, \delta}})$ is an $f_{0, \delta}$-${\PD}$-set of size $f_{0,\delta}+1$ for the $\mathbb Z_4$-linear Hadamard code $H_{0, \delta}$ of length $2^{2\delta-1}$.
\end{theorem}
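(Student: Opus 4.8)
The plan is to apply Theorem~\ref{principalZ4} in the case $\gamma=0$, for which the map $\pi$ is the identity and $\PAut(\mathcal H_{0,\delta})$ is exactly the group $\mathcal L$ of matrices over $\Z_4$ with first column $e_1$ and invertible lower right $(\delta-1)\times(\delta-1)$ block. Thus it suffices to verify two things for the explicitly defined matrices $\mathcal N_i^*$: that each $\mathcal M_i=\mathcal N_i^{-1}$ lies in $\PAut(\mathcal H_{0,\delta})$, and that no two of the matrices $(\mathcal M_i^{-1})^*=\mathcal N_i^*$ share a row. Undoing the $*$-operation on $\mathcal N_i^*$, whose rows are $(1,r_{\delta i+1}),\dots,(1,r_{\delta(i+1)})$, recovers $\mathcal N_i$ with first row $(1,r_{\delta i+1})$ and $k$th row $(0,\,r_{\delta i+k}-r_{\delta i+1})$ for $2\le k\le\delta$; hence $\mathcal N_i$ has the required shape automatically, and membership in $\PAut(\mathcal H_{0,\delta})$ reduces to showing that the $\delta-1$ differences $r_{\delta i+k}-r_{\delta i+1}$ form a $\Z_4$-basis of the free rank-$(\delta-1)$ module $\R$ (for $i=0$ one checks directly that $\mathcal N_0=\Id$). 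The second condition is immediate: the index blocks $\{\delta i+1,\dots,\delta(i+1)\}$ are pairwise disjoint and the $r_j$ run once through the distinct elements of $\R$, so all the rows $(1,r_j)$ occurring in the various $\mathcal N_i^*$ are distinct.

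Everything therefore rests on the invertibility statement, which I expect to be the main obstacle. To handle it I would pass to $\Z_4/(2)\cong\mathbb F_{2^{\delta-1}}$: since the non-units of $\Z_4$ are precisely the multiples of $2$, the matrix $\mathcal N_i$ is invertible over $\Z_4$ if and only if its reduction $\zeta(\mathcal N_i)$ is nonsingular over $\mathbb F_2$, equivalently the reduced elements $\zeta(r_{\delta i+1}),\dots,\zeta(r_{\delta(i+1)})$ are affinely independent. Writing $r=a+2b$ with $a,b\in\T$, reduction modulo $2$ kills $b$ and sends the Teichm\"uller part $a$ into $\mathbb F_{2^{\delta-1}}$, the image $\bar\alpha$ of $\alpha$ being a root of the primitive polynomial $f$ and hence a primitive element. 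In the chosen ordering of $\R$ the $a$-part cycles through $\T=\{0,1,\alpha,\dots,\alpha^{\ell-1}\}$ for each fixed $b$, so modulo $2$ the sequence of reduced $a$-parts is the length-$2^{\delta-1}$ period $0,\bar\alpha^{0},\dots,\bar\alpha^{\ell-1}$ repeated. Because $\delta\le 2^{\delta-1}=|\T|$ and the zeros are spaced $2^{\delta-1}$ apart, every block of $\delta$ consecutive positions contains at most one zero.

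The key step then splits into two cases. If the block for $\mathcal N_i$ meets no zero, its reduced $a$-parts are $\delta$ consecutive powers $\bar\alpha^{k},\dots,\bar\alpha^{k+\delta-1}$ (exponents consecutive modulo $\ell$, hence distinct), and affine independence is precisely the linear independence of $\bar\alpha^{k+1}-\bar\alpha^{k},\dots,\bar\alpha^{k+\delta-1}-\bar\alpha^{k}$, which is Lemma~\ref{primitive} applied to $\mathbb F_{2^{\delta-1}}$ with $m=\delta-1$. If the block meets exactly one zero, the remaining $\delta-1$ entries are consecutive powers $\bar\alpha^{k},\dots,\bar\alpha^{k+\delta-2}$; subtracting the zero element, affine independence amounts to these forming a basis of $\mathbb F_{2^{\delta-1}}$, which holds since they are the image of the basis $\{1,\bar\alpha,\dots,\bar\alpha^{\delta-2}\}$ under multiplication by the unit $\bar\alpha^{k}$. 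In either case $\zeta(\mathcal N_i)$ is nonsingular, so $\mathcal N_i\in\PAut(\mathcal H_{0,\delta})$. Having verified both hypotheses of Theorem~\ref{principalZ4} for $s=f_{0,\delta}$, the theorem yields that $\Phi(\mathcal P_{f_{0,\delta}})$ is an $f_{0,\delta}$-$\PD$-set of size $f_{0,\delta}+1$.
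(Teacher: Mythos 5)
Your argument is correct, and its skeleton coincides with the paper's: both proofs verify the two hypotheses of Theorem~\ref{principalZ4} --- that each $\mathcal N_i$ (equivalently $\mathcal M_i=\mathcal N_i^{-1}$) lies in $\PAut(\mathcal H_{0,\delta})$, and that the matrices $\mathcal N_i^*$ share no rows, the latter being immediate since their rows are distinct elements of $\R$. The genuine difference is how invertibility of $\mathcal N_i$ is established. The paper stays inside $\Z_4$: it lists the three possible forms of the difference sets ($L_1$, $L_2$, $L_3$, the last carrying correction terms $2(b_j-b_i)$ when a block straddles two Teichm\"uller cycles), and kills a hypothetical vanishing combination $\sum\lambda_i v_i=0$ by a sub-case analysis on the coefficients --- some $\lambda_i$ a unit versus all $\lambda_i\in\{0,2\}$, the latter handled via zero divisors --- reducing modulo $2$ inside each sub-case. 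You instead front-load that reduction: invertibility over the local ring $\Z_4$ is equivalent to nonsingularity of the mod-$2$ reduction, so the whole question moves at once into the residue field $\R/2\R\cong\mathbb F_{2^{\delta-1}}$ (note your ``$\Z_4/(2)\cong\mathbb F_{2^{\delta-1}}$'' should read $\R/2\R$), where the $2(b_j-b_i)$ terms vanish and the paper's three cases collapse to your two (block with or without a reduced zero), each settled by the same atomic facts the paper uses: Lemma~\ref{primitive}, and that $\bar\alpha^{k}\cdot\{1,\bar\alpha,\dots,\bar\alpha^{\delta-2}\}$ is an $\mathbb F_2$-basis. Your route buys a shorter, less error-prone verification --- in particular it sidesteps the paper's somewhat delicate $\{0,2\}$-coefficient sub-case --- at the cost of invoking the (standard) determinant characterization of invertibility over $\Z_4$, which the paper avoids stating. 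One small repair: in your ``no zero'' case, the justification ``exponents consecutive modulo $\ell$, hence distinct'' is not quite right as stated (for $\delta=2$ one has $\delta>\ell$); what saves it is that a block avoiding all zeros necessarily sits inside a single Teichm\"uller cycle, so its exponents are genuinely consecutive integers in $\{0,\dots,\ell-1\}$, and for $\delta=2$ this case simply never occurs.
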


\begin{proof}
We need to prove that $r_{\delta i+2}-r_{\delta i+1}, \dots, r_{\delta(i+1)} -r_{\delta i+1}$ are linearly independent over $\Z_4$, for all $i \in \{0, \dots, f_{0,\delta}\}$, to guarantee that $\mathcal N_i\in \PAut(\mathcal H_{0, \delta})$. Note that these vectors are not zero divisors \cite{Z4}.
Since $\alpha^{\ell}=1$, $\{r_{\delta i+2}-r_{\delta i+1}, \dots, r_{\delta(i+1)} -r_{\delta i+1}\}$ is one of the following three sets:
$$\begin{array}{ll}
L_1= & \{ 1, \dots, \alpha^{\delta-2}\}, \\
L_2= & \{\alpha^{k+1}-\alpha^k, \dots, \alpha^{k+\delta-1}-\alpha^k\}, {\rm for \ some} \; k \in \{0, \dots, \ell-1\}, \\
L_3= & \{\alpha^{k+1}-\alpha^k, \dots, \alpha^{\ell-1}-\alpha^k, -\alpha^k + 2(b_j -b_i), \alpha^{\ell}-\alpha^k + 2(b_j -b_i), \dots,  \\
 & \;\; \alpha^{k+\delta-2}-\alpha^k + 2(b_j -b_i)\}, {\rm for \ some} \; b_i, b_j \in \T \; {\rm and} \;   k \in \{0, \dots, \ell-1\}.
\end{array}$$
Elements in $L_1$ are clearly linearly independent over $\Z_4$. Now we prove that the same property is satisfied in $L_2$. Assume on the contrary that there are some $\lambda_i \neq 0$, $i \in \{1, \dots, \delta-1\},$ such that $\sum \lambda_i (\alpha^{k+i}-\alpha^k)=0$. If $\lambda_i \in \{1, 3\}$ for at least one $i\in \{1, \dots, \delta-1\}$, we get a contradiction. Indeed, if we take modulo 2 in the previous linear combination, we obtain that $\sum \bar{\lambda}_i (\bar{\alpha}^{k+i}-\bar{\alpha}^k)=0$, where $\bar{\lambda_i} \in \mathbb Z_2$ and at least one $\bar{\lambda_i} \neq 0$. This is a contradiction by Lemma~\ref{primitive}. 
On the other hand, if $\lambda_i\in \{0,2\}$ for all $i\in \{1, \dots, \delta-1\}$ and there is at least one $\lambda_i=2$, then $\sum 2\lambda'_i (\alpha^{k+i}-\alpha^k)=2[\sum \lambda'_i (\alpha^{k+i}-\alpha^k)]=0$, where $\lambda_i' \in \{0,1\}$ and at least one $\lambda_i'=1$. Hence, $\sum \lambda'_i (\alpha^{k+i}-\alpha^k)=2\lambda$ for some $\lambda \in \mathcal R$, that is, it is a zero divisor.
By taking modulo 2, we obtain a contradiction again by Lemma~\ref{primitive}.

We show that elements in $L_3=\{v_1, \dots, v_{\delta-1}\}$ are also linearly independent over $\Z_4$ by using a slight modification of the previous argument. Suppose that there is at least one $\lambda_i \neq 0$, $i \in \{1, \dots, \delta-1\},$ such that $\sum \lambda_i v_i=0$.  By taking modulo 2, we obtain that $\bar{\lambda}_{\delta-1} \bar{\alpha}^k+\sum \bar{\lambda}_i (\bar{\alpha}^{k+i}-\bar{\alpha}^k)= \bar{\alpha}^k[\bar\lambda_{\delta-1}+\sum \bar{\lambda}_i (\bar{\alpha}^{i}-1)]=0$. Since $\bar{\alpha}^k$ is a unit, it follows that $\bar\lambda_{\delta-1}+\sum \bar{\lambda}_i (\bar{\alpha}^{i}-1)=0$, which gives a contradiction if $\lambda_i \in \{1,3\}$ for at least one index, since $1, \bar{\alpha}-1, \dots, \bar{\alpha}^{\delta-2}-1$ are linearly independent over $\mathbb Z_2$. Note that the binary matrix
$$\left(\begin{array}{cc}
1               &  \mathbf{0} \\
\mathbf{1}      & \operatorname{Id}_{m-1}  \\
\end{array}\right),$$
has determinant 1. If $\lambda_i\in \{0,2\}$ for all $i \in \{1, \dots, \delta-1\}$, we get a contradiction by applying a similar argument to 
the one used above.

Finally, by construction, matrices $\mathcal N_i^*$ have no rows in common, since their rows are different elements of $\R$. By Theorem~\ref{principalZ4}, the result follows.
\end{proof}

\begin{example}
\label{4PDSET}
Let $\mathcal H_{0,3}$ be the quaternary linear Hadamard code of length 16 and type $2^04^3$. Let $\mathcal R = \mathbb Z_4[x]/(h(x))$, where $h(x)=x^2+x+1$. Note that $h(x)$ is a primitive basic irreducible polynomial dividing $x^3-1$ in $\mathbb Z_4[x]$. Let $\alpha$ be a root of $h(x)$. Then, $\T=\{0, 1, \alpha, \alpha^2\}$ and elements in $\R$ are ordered as follows:
$$\begin{array}{rl}
\R = & \{r_1, \dots, r_{16}\} \\
   = & \{0, 1, \alpha, 3+3\alpha, 2, 3, 2 + \alpha, 1+ 3\alpha,2\alpha, 1+2\alpha, \\
     & \; \;  3\alpha, 3+\alpha,2+2\alpha, 3+2\alpha, 2+3\alpha, 1+\alpha \}. \\
\end{array}$$
It is easy to check that matrices $\mathcal{N}_0^*=\operatorname{Id}_3^*$,
\begin{equation*}
\mathcal{N}_1^*= \left(\begin{array}{ccc}
1 & 3 & 3   \\
1 & 2 & 0    \\
1 & 3 & 0    \\
\end{array} \right) ,
\qquad
\mathcal{N}_2^*= \left(\begin{array}{ccc}
1 & 2 & 1   \\
1 & 1 & 3    \\
1 & 0 & 2    \\
\end{array} \right),
\qquad
\mathcal{N}_3^*= \left(\begin{array}{ccc}
1 & 1 & 2   \\
1 & 0 & 3    \\
1 & 3 & 1    \\
\end{array} \right),
\qquad
\mathcal{N}_4^*= \left(\begin{array}{ccc}
1 & 2 & 2   \\
1 & 3 & 2    \\
1 & 2 & 3    \\
\end{array} \right),
\end{equation*}
have no rows in common. Let $\mathcal P_4=\{\mathcal{N}_0^{-1}, \mathcal{N}_1^{-1}, \mathcal{N}_2^{-1}, 
\mathcal{N}_3^{-1}, \mathcal{N}_4^{-1} \}$, where
$\mathcal{N}_0=\operatorname{Id}_ 3$,
\begin{equation*}
\mathcal{N}_1= \left(\begin{array}{ccc}
1 & 3 & 3   \\
0 & 3 & 1    \\
0 & 0 & 1    \\
\end{array} \right) ,
\qquad
\mathcal{N}_2= \left(\begin{array}{ccc}
1 & 2 & 1   \\
0 & 3 & 2    \\
0 & 2 & 1    \\
\end{array} \right),
\qquad
\mathcal{N}_3= \left(\begin{array}{ccc}
1 & 1  & 2   \\
0 & 3  & 1    \\
0 & 2  & 3    \\
\end{array} \right),
\qquad
\mathcal{N}_4= \left(\begin{array}{ccc}
1 & 2  & 2   \\
0 & 1  & 0    \\
0 & 0  & 1    \\
\end{array} \right).
\end{equation*}
The set $\Phi(\mathcal P_4)$ is a 4-$\PD$-set of size 5 for $H_{0,3}$. Note that the bound $f_5=4$ is attained for $H_{0,3}$ despite the search of the 4-$\PD$-set is done in the subgroup $\Phi(\PAut(\mathcal H_{0,3}) )\leq \PAut(H_{0,3})$, since $f_{0,3}=f_5=4$.

\end{example}

We have that the $\Z_4$-linear Hadamard code of length $2^m$ with $\delta=1$ or $\delta=2$ is equivalent to the binary linear Hadamard code of length $2^m$ \cite{K}. However, the technique explained for binary linear Hadamard codes in Section \ref{sec:binarylinear} provides better results (in terms of $s$) that the one explained for  $\Z_4$-linear Hadamard codes when applied for linear codes, since $f_{\gamma, \delta}\leq f_m$, where $m=\gamma+2\delta-1$.

\begin{example}
We have provided a 2-$\PD$-set of size 3 for the binary linear Hadamard code $H_4$ of length 16 in Example \ref{ejemplo}. The code $H_4$ is equivalent to both $\Z_4$-linear Hadamard codes $H_{1,2}$ and $H_{3,1}$. However, a 2-$\PD$-set of size 3 is not achievable for $H_4$ by using Theorem~\ref{principalZ4}, since $f_{1,2}=f_{3,1}=1$.
\end{example}

\begin{example}
The binary linear Hadamard code $H_5$ of length 32 admits a 4-$\PD$-set of size 5 by Theorem~\ref{principal}, since $f_5=4$. Considering $H_5$ as the Gray map image of the quaternary linear Hadamard code $\mathcal H_{2,2}$ or $\mathcal H_{4,1}$, no more than a 3-$\PD$-set of size 4 can be found by using Theorem~\ref{principalZ4}, since $f_{4,1}=2$ and $f_{2,2}=3$.
\end{example}

\section{Recursive construction of $s$-$\PD$-sets for $\Z_4$-linear Hadamard codes}
\label{sec:Z4linearRecursive}

In this section,  given an $s$-$\PD$-set of size $l$ for the $\Z_4$-linear Hadamard code $H_{\gamma,\delta}$ of length $2^m$ and type $2^\gamma 4^\delta$, where $m=\gamma+2\delta-1$ and $l \geq s+1$, we show how to construct recursively an $s$-$\PD$-set of the same size for $H_{\gamma+i,\delta+j}$ of length $2^{m+i+2j}$ and type $2^{\gamma+i}4^{\delta+j}$ for all $i,j\geq 0$.

We first provide a recursive construction considering the elements of $\operatorname{PAut}(\mathcal H_{\gamma, \delta})$ as matrices in $\operatorname{GL}(\gamma+\delta, \mathbb Z_4)$. This construction can be seen as a natural generalization of the technique introduced for binary linear Hadamard codes in Section \ref{sec:binarylinearRecursive}.  Given a matrix $\mathcal M \in \PAut(\mathcal H_{\gamma, \delta})$ and an integer $\kappa\geq 1$, we define
\begin{equation}
\label{kappaconstructionZ4}
\mathcal M(\kappa)=\left(\begin{array}{cccc}
1 & \eta & \mathbf{0} & 2\theta \\
\mathbf{0} & A & \mathbf {0} & 2X \\
\mathbf{0} & \mathbf{0} & \operatorname{Id}_{\kappa} & \mathbf{0} \\
\mathbf{0} & \zeta(Y)& \mathbf{0} & \zeta(B) \\
\end{array}\right).
\end{equation}
\begin{proposition}
\label{recursivoZ4}
Let $\mathcal P_s=\{\mathcal{M}_0, \dots, \mathcal M_s\}\subseteq \operatorname{PAut}(\mathcal H_{\gamma, \delta})$ such that $\Phi(\mathcal P_s)$ is an $s$-$\PD$-set of size $s+1$ for $H_{\gamma, \delta}$ with information set $\Phi(\mathcal I_{\gamma, \delta})$. Then, $\mathcal Q_s=\{(\mathcal M^{-1}_0(\kappa))^{-1}, \dots, \mathcal M^{-1}_s(\kappa))^{-1} \} \subseteq \operatorname{PAut}(\mathcal H_{\gamma+i, \delta+j})$ and $\Phi(\mathcal Q_s)$ is an $s$-$\PD$-set of size $s+1$ for $H_{\gamma+i, \delta+j}$ with information set $\Phi(\mathcal I_{\gamma+i, \delta+j})$, for any $i, j \geq 0$ such that $i+j= \kappa \geq 1$.
\end{proposition}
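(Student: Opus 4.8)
The plan is to mirror the proof of Proposition~\ref{recursivo}, replacing Theorem~\ref{principal} by its quaternary analogue Theorem~\ref{principalZ4}. Because $\PAut(\mathcal{H}_{\gamma+i,\delta+j})$ is a group and $\mathcal{Q}_s$ is built from the inverses of the matrices $\mathcal{M}_i^{-1}(\kappa)$, everything reduces to showing (i) that each $\mathcal{M}_i^{-1}(\kappa)$ lies in $\PAut(\mathcal{H}_{\gamma+i,\delta+j})$, and (ii) that the matrices $(\mathcal{M}_i^{-1}(\kappa))^*$ pairwise share no row. Indeed, for an element $\mathcal{R}=(\mathcal{M}_i^{-1}(\kappa))^{-1}$ of $\mathcal{Q}_s$ one has $(\mathcal{R}^{-1})^*=(\mathcal{M}_i^{-1}(\kappa))^*$, so once (i) and (ii) hold, Theorem~\ref{principalZ4} yields at once that $\Phi(\mathcal{Q}_s)$ is an $s$-$\PD$-set of size $s+1$ for $H_{\gamma+i,\delta+j}$ with information set $\Phi(\mathcal{I}_{\gamma+i,\delta+j})$.

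For (i) I would write $\mathcal{M}_i^{-1}$ in the block form of $\pi(\mathcal{L})$ and split the $\operatorname{Id}_{\kappa}$ block of (\ref{kappaconstructionZ4}) into the $j$ coordinates that become new order-four positions and the $i$ coordinates that become new order-two positions. Then $\mathcal{M}_i^{-1}(\kappa)$ has exactly the shape of an element of the subgroup associated with type $2^{\gamma+i}4^{\delta+j}$: its order-four block becomes $\operatorname{diag}(A,\operatorname{Id}_{j})$, its order-two block becomes $\operatorname{diag}(\operatorname{Id}_{i},\zeta(B))$, and the new off-diagonal entries all vanish. Invertibility follows as in Lemma~\ref{Lsubgroup}, the determinant remaining an odd unit. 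This step is routine once one fixes, consistently with (\ref{double}) and (\ref{quadruple}), the labelling of the coordinates of $\mathcal{H}_{\gamma+i,\delta+j}$ by the columns of $\mathcal{G}_{\gamma+i,\delta+j}$.

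The core is (ii), for which I would compute $(\mathcal{M}_i^{-1}(\kappa))^*$ explicitly by performing the row operations defining the star map. Let $n_1^{(i)}$ denote the first row of $\mathcal{M}_i^{-1}$. The rows split into two kinds: the $\gamma+\delta$ \emph{old} rows, which are exactly the rows of $(\mathcal{M}_i^{-1})^*$ with $\kappa$ zeros inserted in the new positions; and the $\kappa$ \emph{new} rows coming from the $\operatorname{Id}_{\kappa}$ block, each equal to $n_1^{(i)}$ on the old positions and carrying a single nonzero entry $c_{\ell}\in\{1,2\}$ in one new position $\ell$, where $c_{\ell}$ depends only on the order of that position and not on $i$. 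I would then compare a row of $(\mathcal{M}_i^{-1}(\kappa))^*$ with a row of $(\mathcal{M}_j^{-1}(\kappa))^*$, $i\neq j$, in three cases. Two old rows agree only if a row of $(\mathcal{M}_i^{-1})^*$ agrees with a row of $(\mathcal{M}_j^{-1})^*$, which is excluded by hypothesis through Theorem~\ref{principalZ4}. An old row and a new row disagree already on the new positions, where they are respectively zero and nonzero. Finally, two new rows restrict on the old positions to $n_1^{(i)}$ and $n_1^{(j)}$, which differ because $n_1^{(i)}$ is the first row of both $\mathcal{M}_i^{-1}$ and $(\mathcal{M}_i^{-1})^*$, and $(\mathcal{M}_i^{-1})^*$, $(\mathcal{M}_j^{-1})^*$ share no row by hypothesis.

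The main obstacle, exactly as in the binary case, is this last case: one must rule out a collision of the first rows of the two inverse matrices, which is precisely what the hypothesis on $(\mathcal{M}_i^{-1})^*$ and $(\mathcal{M}_j^{-1})^*$ provides. A secondary point needing care is the bookkeeping of which new coordinates are of order four and which of order two, so that the star map is applied with the right coefficient; but since the three-case comparison never uses the value of $c_{\ell}$ beyond its being nonzero, the analysis is uniform over every admissible split $i+j=\kappa$. Invoking Theorem~\ref{principalZ4} then closes the argument.
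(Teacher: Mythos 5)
Your proof is correct and takes essentially the same route as the paper's: verify $\mathcal M_i^{-1}(\kappa)\in\operatorname{PAut}(\mathcal H_{\gamma+i,\delta+j})$ by splitting $\operatorname{Id}_{\kappa}$ into $\operatorname{Id}_j$ and $\operatorname{Id}_i$ blocks, show that a common row between two matrices $(\mathcal M_k^{-1}(\kappa))^*$ and $(\mathcal M_l^{-1}(\kappa))^*$ would force a common row between $(\mathcal M_k^{-1})^*$ and $(\mathcal M_l^{-1})^*$, and conclude by Theorem~\ref{principalZ4}. Your explicit three-case comparison of rows (in particular ruling out a collision of two new rows via the first rows of the starred inverses) merely spells out what the paper's proof asserts holds ``by construction.''
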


\begin{proof}
Note that if $\mathcal M \in  \operatorname{PAut}(\mathcal H_{\gamma, \delta})$, construction (\ref{kappaconstructionZ4}) provides an element $\mathcal M(\kappa) \in  \operatorname{GL}(\gamma+\delta+ \kappa, \mathbb Z_4)$. Taking this into account, together with the fact that $\operatorname{Id}_{\kappa}$ can split as
$$\operatorname{Id}_{\kappa}=\left(\begin{array}{cc}
\operatorname{Id}_j & \mathbf{0} \\
\mathbf{0} & \operatorname{Id}_i \\
\end{array}\right),$$
where $i+j=\kappa\geq 1$, it is obvious that $\mathcal M^{-1}(\kappa)\in \operatorname{PAut}(\mathcal H_{\gamma+i, \delta+j})$ and so its inverse. Thus, $\mathcal Q_s\subseteq \operatorname{PAut}(\mathcal H_{\gamma+i, \delta+j})$. Finally, repeated rows in matrices $(\mathcal M^{-1}_0(\kappa))^*, \dots, (\mathcal M^{-1}_s(\kappa))^*$ cannot occur, since this fact implies repeated rows in matrices  $(\mathcal M^{-1}_0)^*, \dots, (\mathcal M^{-1}_s)^*$ by construction $(\ref{kappaconstructionZ4})$. The result follows from Theorem \ref{principalZ4}. 
\end{proof}

\begin{example}
Let $\mathcal P_4=\{\mathcal M_0, \dots, \mathcal M_4\} \subseteq \operatorname{PAut}(\mathcal H_{0,3})$ be the set, given in Example \ref{4PDSET},
such that $\Phi(\mathcal P_4)$ is a 4-$\PD$-set of size 5 for $H_{0,3}$. By Proposition \ref{recursivoZ4}, the set $\mathcal Q_4=\{\mathcal M^{-1}_i(1))^{-1} \; : \; 0 \leq i \leq 4\}$ is contained in both $\operatorname{PAut}(\mathcal H_{1,3})$ and $\operatorname{PAut}(\mathcal H_{0,4})$. Moreover, $\Phi(\mathcal Q_4)$ is a 4-$\PD$-set of size 5 for $H_{1, 3}$ and $H_{0, 4}$. Nevertheless, it is important to note that the construction of $(\mathcal M^{-1}_i(1))^{*}$ depends on the group where $\mathcal M^{-1}_i(1)$ is considered.
\end{example}

As for binary linear Hadamard codes, a second recursive construction considering the elements of $\operatorname{PAut}(H_{\gamma, \delta})$ as permutations of coordinate positions, that is as elements of $\Sym(2^m)$, can also be provided.
Given four permutations $\sigma_i \in \operatorname{Sym}(n_i), \; i \in \{1, \dots,  4\}$, we define  $(\sigma_1 | \sigma_2 | \sigma_3 | \sigma_4)\in \operatorname{Sym}(n_1+n_2+n_3+n_4)$ in the same way as we defined $(\sigma_1 | \sigma_2) \in \operatorname{Sym}(n_1+n_2)$ in Section \ref{sec:binarylinearRecursive}.

\begin{proposition}
\label{doublepdset}
Let $S$ be an $s$-$\PD$-set of size $l$ for $H_{\gamma, \delta}$ of length $n$ and type $2^{\gamma}4^{\delta}$ with information set $I$. Then, $( S |  S)=\{(\sigma | \sigma) : \sigma \in S\}$ is an $s$-$\PD$-set of size $l$ for $H_{\gamma+1, \delta}$ of length $2n$ and type $2^{\gamma+1}4^{\delta}$ constructed from (\ref{double}) and the Gray map, with any information set $I'=I \cup \{i+n\}$, $i\in I$.
\end{proposition}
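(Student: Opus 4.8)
The plan is to reduce the statement to Proposition~\ref{doublepdsetZ2} by first identifying the binary shape of $H_{\gamma+1,\delta}$. From the proof of Proposition~\ref{infoset}, construction~(\ref{double}) gives $\mathcal H_{\gamma+1,\delta}=\{(u,u),(u,u+\mathbf{2}) : u\in\mathcal H_{\gamma,\delta}\}$. Applying the Gray map coordinatewise and using that $\phi(y+2)=\phi(y)+(1,1)$ for every $y\in\Z_4$, so that $\Phi(u+\mathbf{2})=\Phi(u)+\mathbf{1}=\overline{\Phi(u)}$, I would obtain
$$H_{\gamma+1,\delta}=\{(x,x),(x,\bar{x}) : x\in H_{\gamma,\delta}\},$$
where $\bar{x}=x+\mathbf{1}$ is the binary complement and each half has length $n$. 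This is exactly the ``doubling with complement'' structure used in Proposition~\ref{doublepdsetZ2}, so the remaining steps transfer with only cosmetic changes.

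Next I would check that $I'=I\cup\{i+n\}$, $i\in I$, is an information set and that $(\sigma|\sigma)\in\PAut(H_{\gamma+1,\delta})$ for each $\sigma\in S$. Since a codeword and its complement differ in every position, adjoining any second-half coordinate $i+n$ to $I$ distinguishes $(x,x)$ from $(x,\bar{x})$; combined with the fact that $I$ is an information set for $H_{\gamma,\delta}$, this yields $|(H_{\gamma+1,\delta})_{I'}|=2\,|H_{\gamma,\delta}|=|H_{\gamma+1,\delta}|$. For the automorphism claim, $\sigma$ permutes coordinates and hence fixes the all-ones vector, so $(\sigma|\sigma)(x,x)=(\sigma(x),\sigma(x))$ and $(\sigma|\sigma)(x,\bar{x})=(\sigma(x),\overline{\sigma(x)})$; since $\sigma(x)\in H_{\gamma,\delta}$, both images lie in $H_{\gamma+1,\delta}$.

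Finally, the decoding argument follows the binary-linear case verbatim. Given an error $e=(a,b)\in\Z_2^{2n}$ with $\wt(e)\le s$, I would define the projected error $c\in\Z_2^{n}$ by $c_i=1$ if and only if $a_i=1$ or $b_i=1$; then $\wt(c)\le\wt(e)\le s$. As $S$ is an $s$-$\PD$-set for $H_{\gamma,\delta}$ with respect to $I$, there is $\sigma\in S$ with $\sigma(c)_I=\mathbf{0}$, and this single $\sigma$ simultaneously clears the supports of $a$ and of $b$ out of $I$ in their respective halves. Consequently $(\sigma|\sigma)(e)_{I\cup J}=\mathbf{0}$ with $J=\{i+n : i\in I\}$, and since $I'\subseteq I\cup J$ the result follows.

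The only genuinely new ingredient is the Gray-map identity $\Phi(u+\mathbf{2})=\overline{\Phi(u)}$, which turns the quaternary doubling~(\ref{double}) into binary complement-doubling; once this is in place I expect no real obstacle, as the rest is a direct transcription of the proof of Proposition~\ref{doublepdsetZ2}.
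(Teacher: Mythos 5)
Your proposal is correct and follows essentially the same route as the paper: the paper's proof likewise observes that construction~(\ref{double}) together with the Gray map gives $H_{\gamma+1,\delta}=\{(x,x),(x,\bar x) : x\in H_{\gamma,\delta}\}$, then invokes the argument of Proposition~\ref{doublepdsetZ2} and the proof of Proposition~\ref{infoset} for the information set $I'$. The only difference is that you spell out the details the paper leaves implicit, in particular the identity $\Phi(u+\mathbf{2})=\overline{\Phi(u)}$ justifying the complement-doubling structure, which is a welcome clarification rather than a deviation.
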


\begin{proof}
Since $H_{\gamma+1, \delta}=\{(x, x), (x, \bar x) : x \in H_{\gamma, \delta}\}$, where $\bar x$ is the complementary vector of $x$,
the result follows using the same argument as in the proof of Proposition \ref{doublepdsetZ2}.
By the proof of Proposition \ref{infoset}, we can add any of the coordinate positions of $\{i+n : i \in I\}$ to $I$ in order to form a suitable information set $I'$ for $H_{\gamma+1, \delta}$. 
\end{proof}

Proposition \ref{doublepdset} cannot be generalized directly for $\Z_4$-linear Hadamard codes $H_{\gamma, \delta+1}$ constructed from (\ref{quadruple}) and the Gray map.
Note that if $S$ is an $s$-$\PD$-set for $H_{\gamma, \delta}$, then  $(S | S | S | S)=\{(\sigma | \sigma | \sigma | \sigma) : \sigma \in S\}$ is not always an $s$-$\PD$-set for $H_{\gamma, \delta+1}$, since in general $(\sigma | \sigma | \sigma | \sigma) \notin \operatorname{PAut}(H_{\gamma, \delta})$. For example,  $\sigma=(1, 5)(2, 8, 3, 6, 4, 7) \in \operatorname{PAut}(H_{0,2})\subseteq \Sym(8)$, but $\pi=(\sigma | \sigma | \sigma | \sigma) \notin \operatorname{PAut}(H_{0, 3})\subseteq\Sym(32)$, since $\pi(\Phi((0,0,0,0,1,1,1,1,2,2,2,2,3,3,3,3)))=\Phi((0,0,0,0,0,2,0,2,2,2,2,2,2,0,2,0))\not \in H_{0,3}$.

\begin{proposition}
\label{propquadruple}
Let $\mathcal{S}\subseteq \operatorname{PAut}(\mathcal H_{\gamma, \delta})$ such that
$\Phi(\mathcal S)$ is an $s$-$\PD$-set of size $l$ for $H_{\gamma, \delta}$ of length $n$ and type $2^{\gamma}4^{\delta}$ with information set $I$.  Then, $\Phi((\mathcal{S} | \mathcal{S} | \mathcal{S} | \mathcal{S}))=\{\Phi((\tau | \tau | \tau | \tau)) : \tau \in \mathcal S\}$ is an $s$-$\PD$-set of size $l$ for  $H_{\gamma, \delta+1}$ of length $4n$ and type $2^{\gamma}4^{\delta+1}$ constructed from (\ref{quadruple}) and the Gray map, with any information set $I''=I \cup \{i+n,j+n\}$, $i,j\in I$ and $i\not =j$.
\end{proposition}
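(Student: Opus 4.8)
The plan is to check the three requirements packaged in the definition of an $s$-$\PD$-set one at a time: that each $\Phi((\tau\mid\tau\mid\tau\mid\tau))$ is a permutation automorphism of $H_{\gamma,\delta+1}$, that $I''$ is an information set, and that the resulting set moves every $s$-set of coordinates out of $I''$. The decisive point, and the reason the hypothesis is $\mathcal S\subseteq\PAut(\mathcal H_{\gamma,\delta})$ rather than just $\Phi(\mathcal S)\subseteq\PAut(H_{\gamma,\delta})$, already shows up in the first requirement: it is precisely its failure for an arbitrary binary permutation that is exhibited by the example preceding the statement.

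First I would recall, as in the proof of Proposition~\ref{infoset}, the explicit form $\mathcal H_{\gamma,\delta+1}=\{(u,u,u,u),\,(u,u+\mathbf{1},u+\mathbf{2},u+\mathbf{3}),\,(u,u+\mathbf{2},u,u+\mathbf{2}),\,(u,u+\mathbf{3},u+\mathbf{2},u+\mathbf{1}):u\in\mathcal H_{\gamma,\delta}\}$ coming from construction~(\ref{quadruple}). For $\tau\in\mathcal S\subseteq\Sym(\beta)$ the permutation $\tau$ acts $\Z_4$-linearly on coordinates and fixes every constant vector, so $\tau(u+\mathbf{a})=\tau(u)+\mathbf{a}$ for $\mathbf{a}\in\{\mathbf{1},\mathbf{2},\mathbf{3}\}$ and $\tau(u)\in\mathcal H_{\gamma,\delta}$. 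Applying $(\tau\mid\tau\mid\tau\mid\tau)$ therefore sends each of the four listed forms to a codeword of the same form over $\tau(u)$, so $(\tau\mid\tau\mid\tau\mid\tau)\in\PAut(\mathcal H_{\gamma,\delta+1})$ and hence $\Phi((\tau\mid\tau\mid\tau\mid\tau))\in\PAut(H_{\gamma,\delta+1})$. A short computation with the definition of $\Phi$ on $\Sym(4\beta)$ also gives $\Phi((\tau\mid\tau\mid\tau\mid\tau))=(\Phi(\tau)\mid\Phi(\tau)\mid\Phi(\tau)\mid\Phi(\tau))$, acting blockwise on the four length-$n$ binary blocks; I will use this blockwise form throughout.

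For the information set I would invoke Proposition~\ref{infoset}: the projection of $H_{\gamma,\delta+1}$ onto the first block is exactly $H_{\gamma,\delta}$, so agreement on $I\subseteq$ block~$1$ already forces the underlying $u$ to coincide, and the two coordinates adjoined in block~$2$ then only need to separate the four codeword types over a fixed $u$. This separation holds because $u,u+\mathbf{1},u+\mathbf{2},u+\mathbf{3}$ differ in every coordinate, so their Gray images realize all four values on the binary pair arising from a single order-four position; with $I$ determining $u$, this yields $|(H_{\gamma,\delta+1})_{I''}|=2^{|I|+2}$, i.e.\ $I''$ is an information set.

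Finally, for the $s$-$\PD$ property I would use the ``OR-combination'' trick of Proposition~\ref{doublepdsetZ2}. Writing an error $e=(e_1,e_2,e_3,e_4)\in\Z_2^{4n}$ with $\wt(e)\le s$, define $c\in\Z_2^{n}$ by $c_k=1$ iff $(e_1)_k=1$ or $(e_2)_k=1$, so that $\wt(c)\le\wt(e_1)+\wt(e_2)\le s$. Since $\Phi(\mathcal S)$ is an $s$-$\PD$-set for $H_{\gamma,\delta}$ with information set $I$, some $\Phi(\tau)$ clears $c$ from $I$; as the supports of $e_1$ and $e_2$ lie in that of $c$, this same $\Phi(\tau)$ clears $e_1$ from the block-$1$ part $I$ of $I''$ and $e_2$ from the block-$2$ part $\{i,j\}\subseteq I$ of $I''$, while blocks~$3$ and~$4$ meet $I''$ trivially. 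Hence $(\Phi(\tau)\mid\Phi(\tau)\mid\Phi(\tau)\mid\Phi(\tau))(e)$ avoids $I''$. I expect the main obstacle to lie in the first two requirements rather than in this last one: one must use the quaternary-automorphism hypothesis crucially, because the four blocks are genuinely inequivalent (they differ by the Gray-incompatible shifts $\mathbf{1}$ and $\mathbf{3}$, not merely by complementation as in the $\gamma\mapsto\gamma+1$ case), and the two new information coordinates must be positioned so as to separate all four types, which is the delicate counterpart of the single free coordinate available in Proposition~\ref{doublepdset}.
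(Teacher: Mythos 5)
Your proposal follows essentially the same route as the paper's proof: it uses the coset decomposition of $\mathcal H_{\gamma,\delta+1}$ coming from construction (\ref{quadruple}) to verify $(\tau|\tau|\tau|\tau)\in\PAut(\mathcal H_{\gamma,\delta+1})$ (the paper calls this ``easy to see''; you supply the argument correctly, and you also make explicit the identity $\Phi((\tau|\tau|\tau|\tau))=(\Phi(\tau)|\Phi(\tau)|\Phi(\tau)|\Phi(\tau))$, which the paper uses silently), and then the OR-combination trick of Propositions \ref{doublepdsetZ2} and \ref{doublepdset} for the $s$-$\PD$ property, which you execute correctly: only blocks 1 and 2 can meet $I''$, and the inclusion $\{i,j\}\subseteq I$ is exactly where the hypothesis that the adjoined positions are shifted copies of positions of $I$ enters.

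There is, however, a genuine gap in your information-set step. Your separation argument --- that the Gray images of $u,u+\mathbf{1},u+\mathbf{2},u+\mathbf{3}$ realize all four values ``on the binary pair arising from a single order-four position'' --- is valid only when $\{i+n,j+n\}$ is such a pair, i.e.\ when $\{i,j\}=\{2k-1,2k\}$ for a single quaternary coordinate $k$; for an arbitrary pair of distinct $i,j\in I$ the conclusion you draw from it is a non sequitur, and in fact false. Concretely, take $i,j$ both odd (say $i=1$, $j=3$, with $I=\Phi(\mathcal I_{0,3})=\{1,2,3,4,9,10\}$ and the $\PD$-set of Example \ref{4PDSET}): since $\phi(1)=(0,1)$, the vector $\Phi(\mathbf{1})$ vanishes on all odd coordinates, so the two distinct codewords $\Phi((\mathbf{0},\mathbf{0},\mathbf{0},\mathbf{0}))=\mathbf{0}$ and $\Phi((\mathbf{0},\mathbf{1},\mathbf{2},\mathbf{3}))$ both restrict to the zero vector on $I''=I\cup\{i+n,j+n\}$, so this $I''$ is not an information set. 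What your argument actually proves is the proposition with $\{i,j\}$ restricted to Gray pairs contained in $I$. In fairness, the paper's own proof shares this defect --- it merely cites Proposition \ref{infoset}, whose proof likewise only adjoins a whole quaternary coordinate, hence a Gray pair of binary coordinates --- so your attempt is on par with the published argument; but your closing sentence, observing that the two new coordinates ``must be positioned so as to separate all four types,'' names precisely the condition that neither your proof nor the paper's verifies for arbitrary $i\neq j$, and that for non-Gray pairs simply does not hold.
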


\begin{proof}
Since $\mathcal H_{\gamma, \delta+1}$ is constructed from (\ref{quadruple}), $\mathcal H_{\gamma, \delta+1}=\{(u, u, u, u), (u, u + \mathbf{1}, u +\mathbf{2}, u+ \mathbf{3}), (u, u + \mathbf{2}, u, u+ \mathbf{2}), (u, u + \mathbf{3}, u, u+ \mathbf{1}) : u \in \mathcal H_{\gamma, \delta}\}$. It is easy to see that if $\tau \in  \operatorname{PAut}(\mathcal H_{\gamma, \delta})$, then $(\tau | \tau | \tau | \tau) \in  \operatorname{PAut}(\mathcal H_{\gamma, \delta+1})$.

Let $\sigma=\Phi(\tau)$.
Finally, we need to prove that for every $e \in \mathbb Z_2^{4n}$ with $\wt(e) \leq s$, there is $(\sigma | \sigma | \sigma | \sigma) \in \Phi((\mathcal{S} | \mathcal{S} | \mathcal{S} | \mathcal{S}))$ such that $(\sigma | \sigma | \sigma | \sigma)(e)_{I''} = \mathbf{0}$, where $I''\subseteq \{1, \dots, 4n\}$ is an information set for $H_{\gamma, \delta+1}$ with $\gamma+2 (\delta+1)$ coordinate positions. Using a similar argument to that given in the proofs of Propositions~\ref{doublepdsetZ2} and \ref{doublepdset},
the result follows.
Moreover, by the proof of Proposition~\ref{infoset}, any $I''=I \cup \{i+n,j+n\}$
with $i,j\in I$ and $i\not =j$ is a suitable information set for $H_{\gamma, \delta+1}$. 
\end{proof}


Propositions \ref{doublepdset} and \ref{propquadruple} can be applied recursively to acquire $s$-$\PD$-sets for any $\Z_4$-linear Hadamard codes obtained (by using constructions (\ref{double}) and (\ref{quadruple})) from a given $\Z_4$-linear Hadamard code where we already have such set. With this aim in mind, let denote by $2S$ the set $(S | S)$ and by $2^iS=2(2( \overset{i)}{\dots} (2S))$.

\begin{corollary}
\label{doublequadruple}
Let $\mathcal S\subseteq \operatorname{PAut}(\mathcal H_{\gamma, \delta})$ such that
$\Phi(\mathcal S)$ is an $s$-$\PD$-set of size $l$ for $H_{\gamma, \delta}$ of length $2^m$ and type $2^{\gamma}4^{\delta}$ with information set $I$. Then, $\Phi(2^{i+2j}\mathcal S)$ is an $s$-$\PD$-set of size $l$ for $H_{\gamma+i, \delta+j}$ of length $2^{m+i+2j}$ and type $2^{\gamma+i}4^{\delta+j}$ with information set obtained by applying recursively Proposition \ref{infoset}, for all $i, j\geq 0$.
\end{corollary}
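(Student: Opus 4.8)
The plan is to argue by induction on the number of recursive steps, using Proposition~\ref{doublepdset} to handle each increment of $\gamma$ (construction~(\ref{double})) and Proposition~\ref{propquadruple} to handle each increment of $\delta$ (construction~(\ref{quadruple})). Starting from the hypothesis that $\Phi(\mathcal S)$ is an $s$-$\PD$-set of size $l$ for $H_{\gamma, \delta}$, I would apply the $\delta$-step $j$ times and the $\gamma$-step $i$ times in an order consistent with the recursive construction (\ref{double})--(\ref{quadruple}) of $\mathcal G_{\gamma+i,\delta+j}$, so that each intermediate code is exactly the quaternary Hadamard code prescribed by those constructions. Since each single step preserves both the $s$-$\PD$-set property and the size $l$, the conclusion for general $i,j$ follows once the steps are shown to chain correctly.

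The main bookkeeping is to recognize the resulting set as $\Phi(2^{i+2j}\mathcal S)$. First I would observe that the Gray map commutes with block duplication: because $\Phi((x,x))=(\Phi(x),\Phi(x))$ for every $x\in\mathbb Z_4^{\beta}$, one has $\Phi((\tau|\tau))=(\Phi(\tau)|\Phi(\tau))$ for all $\tau\in\Sym(\beta)$, and therefore $\Phi(2\mathcal T)=2\Phi(\mathcal T)$ for any set $\mathcal T$ of quaternary permutations. Construction~(\ref{double}) duplicates the underlying block once and construction~(\ref{quadruple}) duplicates it twice, so after $i$ applications of the first and $j$ of the second the block has been duplicated $i+2j$ times; hence the quaternary set produced is $2^{i+2j}\mathcal S$, its Gray image is $\Phi(2^{i+2j}\mathcal S)=2^{i+2j}\Phi(\mathcal S)$, and the length has grown from $2^m$ to $2^{m+i+2j}$, as claimed.

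For the inductive step I would check that the hypotheses of the relevant proposition hold at each stage. Proposition~\ref{doublepdset} acts at the binary level and yields $(\Phi(\mathcal T)|\Phi(\mathcal T))$, which by the commuting relation equals $\Phi((\mathcal T|\mathcal T))$; one must then verify $(\tau|\tau)\in\PAut(\mathcal H_{\gamma'+1,\delta'})$, using $\mathcal H_{\gamma'+1,\delta'}=\{(u,u),(u,u+\mathbf{2}):u\in\mathcal H_{\gamma',\delta'}\}$, so that the quaternary representation $2\mathcal T\subseteq\PAut(\mathcal H_{\gamma'+1,\delta'})$ survives. Proposition~\ref{propquadruple} requires exactly this quaternary representation and returns $\Phi((\mathcal T|\mathcal T|\mathcal T|\mathcal T))=\Phi(2^{2}\mathcal T)$. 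I expect the principal obstacle to be precisely this maintenance of the quaternary structure: after a binary doubling one must confirm that the set is still the Gray image of automorphisms of the enlarged quaternary code, for otherwise the quadruple construction could no longer be applied. Finally, the information set is built by invoking Proposition~\ref{infoset} at every step, which adjoins the first new coordinate and produces the stated information set for $H_{\gamma+i,\delta+j}$.
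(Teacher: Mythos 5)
Your proposal is correct and follows essentially the same route as the paper, whose proof is simply to apply Propositions~\ref{infoset}, \ref{doublepdset} and \ref{propquadruple} recursively. The additional bookkeeping you supply---the identity $\Phi((\tau|\tau))=(\Phi(\tau)|\Phi(\tau))$, the count $i+2j$ of doublings, and the verification that $(\mathcal S|\mathcal S)\subseteq\PAut(\mathcal H_{\gamma+1,\delta})$ so the quaternary structure needed by Proposition~\ref{propquadruple} is preserved---is exactly what the paper leaves implicit in calling the result trivial.
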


\begin{proof}
The result comes trivially by applying Propositions \ref{infoset}, \ref{doublepdset} and \ref{propquadruple}. 
\end{proof}

%

Note that, from Theorem \ref{construccionZ4} and Proposition \ref{doublepdset}, we have explicitly provided an $f_{0, \delta}$-${\PD}$-set of size $f_{0, \delta}$+1 for each nonlinear $\mathbb Z_4$-linear Hadamard code $H_{\gamma, \delta}$, $\gamma \geq 0, \delta \geq 3$.

\section{Conclusions}

An alternative permutation decoding method that can be applied to $\mathbb Z_2 \mathbb Z_4$-linear codes \cite{Z2Z4Lineals}, which include $\mathbb Z_4$-linear codes, was presented in \cite{BeBoFeVi}. However, it remained as an open question to determine PD-sets for some families of $\mathbb Z_2\mathbb Z_4$-linear codes. In this paper, the problem of finding $s$-$\PD$-sets of minimum size $s+1$ for binary linear and $\mathbb Z_4$-linear Hadamard codes is addressed by finding $s+1$ invertible matrices over $\mathbb Z_2$ or $\mathbb Z_4$, respectively, which satisfy certain conditions.
Moreover, note that the first examples and constructions of (nonlinear) $\Z_4$-linear Hadamard codes are provided.  This approach establishes equivalent results to the ones obtained for simplex codes in \cite{FKM}.

As a future research in this topic, it would be interesting to provide explicitly an $f_{\gamma, \delta}$-${\PD}$-set of size $f_{\gamma, \delta}+1$ for $H_{\gamma, \delta}$, $s$-$\PD$-sets of minimum size for $\mathbb Z_2\mathbb Z_4$-linear Hadamard codes in general \cite{PRV}, or other families of $\Z_4$-linear codes such as Kerdock codes \cite{Z4}.

\end{document}